\DeclareMathAlphabet{\mathcal}{OMS}{cmsy}{m}{n}
\newcommand{\ema}[1]{\ensuremath{#1}\xspace}
\newcommand{\inte}[2][0]{\ema{\left\llbracket #1,#2 \right\rrbracket}}
\newcommand{\pro}[1]{\ema{\mathds{P}\left(#1\right)}}
\newcommand{\expe}[1]{\ema{\mathds{E}\left(#1\right)}}
\newcommand{\indi}[2]{\ema{\mathds{1}_{#1}\left( #2 \right)}}
\newcommand{\expu}[1]{\ema{e^{#1}}}
\newcommand{\expi}[1]{\ema{\operatorname{exp}\left( #1\right)}}
\newcommand{\pinf}{\ema{+\infty}}
\newsavebox\myboxA
\newsavebox\myboxB
\newlength\mylenA
\newcommand*\xoverline[2][0.75]{%
    \sbox{\myboxA}{$\m@th#2$}%
    \setbox\myboxB\null
    \ht\myboxB=\ht\myboxA%
    \dp\myboxB=\dp\myboxA%
    \wd\myboxB=#1\wd\myboxA
    \sbox\myboxB{$\m@th\overline{\copy\myboxB}$}
    \setlength\mylenA{\the\wd\myboxA}
    \addtolength\mylenA{-\the\wd\myboxB}%
    \ifdim\wd\myboxB<\wd\myboxA%
       \rlap{\hskip 0.5\mylenA\usebox\myboxB}{\usebox\myboxA}%
    \else
        \hskip -0.5\mylenA\rlap{\usebox\myboxA}{\hskip 0.5\mylenA\usebox\myboxB}%
    \fi}
\newcommand{\barov}[1]{\xoverline[1]{#1}}
\newtheorem{theorem}{Theorem}
\newtheorem{lemma}{Lemma}
\theoremstyle{definition}
\newcommand{\figsidebyside}[9]{
\begin{figure}[#1]
\centering
\begin{minipage}{#2\textwidth}
\begin{center}
\includegraphics[width=\textwidth]{#3}
\end{center}
\captionof{figure}{#4 \label{fig:#5}}
\end{minipage}\hfill%
\begin{minipage}{#6\textwidth}
\begin{center}
\includegraphics[width=\textwidth]{#7}
\end{center}
\captionof{figure}{#8 \label{fig:#9}}
\end{minipage}
\end{figure}
}
\g@addto@macro\normalsize{%
  \setlength\abovedisplayskip{2pt}
  \setlength\belowdisplayskip{2pt}
  \setlength\abovedisplayshortskip{2pt}
  \setlength\belowdisplayshortskip{2pt}
}
\newcommand{\vspp}[1]{\pp{\vspace*{#1cm}}}
\newcommand{\sta}[1]{\ema{\mathcal{S}_{#1}}}
\newcommand{\eve}[1]{\ema{E_{\mathrm{#1}}}}
\newcommand{\thex}[2]{\ema{X_{[#1,#2[}}}
\newcommand{\kkdur}[2]{\ema{a_{#1,#2}}}
\newcommand{\kkaft}[1]{\ema{b_{#1}}}
\newcommand{\rwi}[2]{\ema{\mathit{ct}_{\mathrm{#1}}\left(#2\right)}}
\newcommand{\rw}[1]{\rwi{}{#1}}
\newcommand{\rwh}[1]{\ema{\xoverline[.88]{\mathit{ct}}\left(#1\right)}}
\newcommand{\ct}{\ema{P}}
\newcommand{\trl}{\ema{\ct_{\mathit{rl}}}}
\newcommand{\atrl}{\ema{\xoverline[.7]{\ct_{\mathit{rl}}}}}
\newcommand{\atps}{\ema{\xoverline[.7]{\ct_{\mathit{ps}}}}}
\newcommand{\tcom}{\ema{\ct_{\mathit{com}}}}
\newcommand{\fa}[1]{\ema{f_{#1}}}
\newcommand{\atrlf}{\ema{\trl^{(0)}}}
\newcommand{\atrli}{\ema{\trl^{(\mathrm{i})}}}
\newcommand{\scas}{\ema{\mathit{cc}}}
\newcommand{\fcas}{\ema{\mathit{cc}}}
\newcommand{\mem}{\ema{\mathit{rc}}}
\newcommand{\calrl}{\ema{\mathit{cw}}}
\newcommand{\rint}[4]{\ema{\int_{#1}^{#2} #3 \, \mathit{d#4}}}
\newcommand\vespv{\ema{s}}
\newcommand\vesp[1]{\ema{\vespv_{#1}}}
\newcommand{\avexp}[1]{\ema{\barov{e}\left(#1\right)}}
\newcommand{\reexp}[1]{\ema{e\left(#1\right)}}
\newcommand{\difavexp}[1]{\ema{\barov{e}'\left(#1\right)}}
\newcommand{\trlo}{\ema{\trl^{(0)}}}
\newcommand{\exppl}{(\texttt{+})}
\newcommand{\cw}{\ema{\mathit{cw}}}
\newcommand{\pw}{\ema{\mathit{pw}}}
\newcommand{\rc}{\ema{\mathit{rc}}}
\newcommand{\cc}{\ema{\mathit{cc}}}
\newcommand{\rlw}{\ema{\mathit{rlw}}}
\newcommand{\rlsiz}{\ema{\rlw^{\exppl}}}
\newcommand{\thru}{\ema{T}}
\newcommand{\ctot}{\ema{P}}
\newcommand{\thr}[1]{\ema{\mathcal{T}_{#1}}}
\newcommand{\shifti}[2]{\ema{\mathit{delay}_{#1}\left(#2\right)}}
\newcommand{\avsupe}[1]{\ema{\barov{\mathit{sp}}\left(#1\right)}}
\def\ffuf{\ema{\operatorname{f_1}}}
\def\ffus{\ema{\operatorname{f_2}}}
\def\ffusi{\ema{\ffus^{-1}}}
\newcommand{\fuf}[1]{\ema{\ffuf{\left( #1\right)}}}
\newcommand{\fus}[1]{\ema{\ffus\left( #1\right)}}
\newcommand{\fusi}[1]{\ema{\ffusi\left( #1\right)}}
\newcommand{\rl}{retry loop\xspace}
\newcommand{\rls}{retry loops\xspace}
\newcommand{\wl}{work loop\xspace}
\newcommand{\supw}{success period\xspace}
\newcommand{\Supw}{Success period\xspace}
\newcommand{\SUpw}{Success Period\xspace}
\newcommand{\supws}{success periods\xspace}
\newcommand{\cww}{critical work\xspace}
\newcommand{\pww}{parallel work\xspace}
\newcommand{\pwws}{parallel works\xspace}
\newcommand{\re}{retry\xspace}
\newcommand{\res}{retries\xspace}
\newcommand{\RES}{Retries\xspace}
\newcommand{\wc}{worst-case\xspace}
\newcommand{\avba}{average-based\xspace}
\newcommand{\Avba}{Average-based\xspace}
\newcommand{\ps}{parallel section\xspace}
\newcommand{\ds}{data structure\xspace}
\newcommand{\dss}{data structures\xspace}
\newcommand{\cas}{{\it CAS}\xspace}
\newcommand{\cass}{\cas{}'s\xspace}
\newcommand{\rf}{{\it Read}\xspace}
\newcommand{\acc}{{\it Access}\xspace}
\newcommand{\deq}{deque\xspace}
\newcommand{\deqs}{deques\xspace}
\newcommand{\Deq}{Deque\xspace}
\newcommand{\enqop}{\FuncSty{Enqueue}\xspace}
\newcommand{\deqop}{\FuncSty{Dequeue}\xspace}
\newcommand{\popop}{\FuncSty{Pop}\xspace}
\newcommand{\pushop}{\FuncSty{Push}\xspace}
\newcommand{\incop}{\FuncSty{Increment}\xspace}
\newcommand{\decop}{\FuncSty{Decrement}\xspace}
\newcommand{\ghz}[1]{\ema{#1\,\mathrm{GHz}}}
\newcommand{\megb}[1]{\ema{#1\,\mathrm{MB}}}
\newcommand{\cycles}[1]{\ema{#1\,\mathrm{cycles}}}
\newcommand{\uow}[1]{\ema{#1\,\mathrm{u.o.w.}}}
\newcommand{\aexpi}[1]{\ema{\barov{e_i}\left(#1\right)}}
\newcommand{\aexp}[2]{\ema{\barov{e_{#1}}\left(#2\right)}}
\newcommand{\expansion}[1]{\avexp{#1} }
\newcommand{\expansionp}[1]{\difavexp{#1} }
\newcommand{\stag}[1]{\ema{\mathit{Stage}_{#1}}}
\newcommand{\casn}[1]{\ema{\mathit{CAS}_{#1}}}
\newcommand{\pushl}{\FuncSty{PushLeft}\xspace}
\newcommand{\pushr}{\FuncSty{PushRight}\xspace}
\newcommand{\popl}{\FuncSty{PopLeft}\xspace}
\newcommand{\popr}{\FuncSty{PopRight}\xspace}
\newcommand{\anch}{{\it Anchor}\xspace}
\newcommand{\ie}{\textit{i.e.}\xspace}
\newcommand{\etal}{\textit{et al.}\xspace}
\newcommand{\eg}{\textit{e.g.}\xspace}
\newcommand{\afort}{\textit{a fortiori}\xspace}
\newcommand{\Afort}{\textit{A fortiori}\xspace}
\newcommand\rr[1]{#1}
\newcommand\pp[1]{}
\newcommand\pr[2]{#2}
\newcommand\tra[1]{}
\newcommand\falseparagraph[1]{\noindent{\bf #1:}\xspace}
\newcommand{\removelatexerror}{\let\@latex@error\@gobble}
\newcommand{\abstalgo}{
\removelatexerror
\pp{\scriptsize}
\begin{procedure}[H]
\SetKwData{pet}{execution\_time}
\SetKwData{pdo}{done}
\SetKwData{psucc}{success}
\SetKwData{pcur}{current}
\SetKwData{pnew}{new}
\SetKwData{pacp}{AP}
\SetKwData{pttot}{t}
\SetKwFunction{pinit}{Initialization}
\SetKwFunction{ppw}{Parallel\_Work}
\SetKwFunction{pcw}{Critical\_Work}
\SetKwFunction{pread}{Read}
\SetKwFunction{pcas}{CAS}

\SetAlgoLined
\While{! \pdo}{\nllabel{alg:li-bwl}
\ppw{}\;\nllabel{alg:li-ps}
\While{! \psucc}{\nllabel{alg:li-bcs}
\pcur $\leftarrow$ \pread{\pacp}\;\nllabel{alg:li-bbcs}
\pnew $\leftarrow$ \pcw{\pcur}\;
\psucc $\leftarrow$ \pcas{\pacp, \pcur, \pnew}\;\nllabel{alg:li-ecs}
}
}
\caption{AbstractAlgorithm()\label{alg:gen-name}}
\end{procedure}
}
\newcommand{\Watiw}{Slack time\xspace}
\newcommand{\WAtiw}{Slack Time\xspace}
\newcommand{\watiw}{slack time\xspace}
\newcommand{\staw}{stage\xspace}
\newcommand{\staws}{stages\xspace}
\newcommand{\wati}[1]{\ema{\mathit{st}\left(#1\right)}\xspace}
\newcommand{\avwati}[1]{\ema{\xoverline[.8]{\mathit{st}}\left(#1\right)}\xspace}
\newcommand{\COmpw}{Completion time\xspace}
\newcommand{\compw}{completion time\xspace}
\DeclarePairedDelimiter\abs{\lvert}{\rvert}%
\DeclarePairedDelimiter\norm{\lVert}{\rVert}%
\let\oldabs\abs
\def\abs{\@ifstar{\oldabs}{\oldabs*}}
\let\oldnorm\norm
\def\norm{\@ifstar{\oldnorm}{\oldnorm*}}
\title{How Lock-free Data Structures Perform in Dynamic Environments:
 Models and Analyses}
\author[1]{Aras Atalar}
\author[2]{Paul Renaud-Goud}
\author[1]{Philippas Tsigas}
\affil[1]{Chalmers University of Technology, S-41296 G\"oteborg, Sweden\\
  \texttt{aaras|tsigas@chalmers.se}}
\affil[2]{Toulouse Institute of Computer Science Research, F-31062 Toulouse, France\\
  \texttt{prenaud@irit.fr}}
\authorrunning{A. Atalar, P. Renaud-Goud and P. Tsigas} 
\subjclass{D.1.3 Concurrent Programming}
\keywords{Lock-free, Data Structures, Parallel Computing, Performance, Modeling, Analysis}
\title{How Lock-free Data Structures Perform\\ in Dynamic Environments:\\ Models and Analyses}
\author{Aras Atalar\textsuperscript{\dag}, Paul Renaud-Goud\textsuperscript{$\star$} and Philippas Tsigas\textsuperscript{\dag}\\\textsuperscript{\dag} Chalmers University of Technology\\
  \textsuperscript{$\star$} Institut de Recherche en Informatique de Toulouse\\
  \vspace*{.3cm}
}
\begin{document}

\SetFuncSty{textsf}

\maketitle

\rr{\tableofcontents\newpage}

\newcommand{\donc}{\ema{\leadsto}}

\begin{abstract}
In this paper we present two analytical frameworks for calculating
the performance of lock-free \dss. Lock-free
\dss
are based on \rls and are called
by application-specific routines.
In contrast to previous work, we consider in this paper lock-free \dss
in dynamic environments. The size of each of the
\rls, and
the size of the
application routines invoked in between,
are not constant but may change dynamically. The new frameworks follow
two different approaches. The first framework, the simplest one, is
based on queuing theory. It introduces an \avba approach that
facilitates a
more coarse-grained analysis, with the benefit of being ignorant of
size distributions. Because of this independence from the distribution nature
it covers a set of complicated designs. The second approach,
instantiated with an exponential distribution for the size of the
application routines,
uses Markov chains, and is tighter because it
constructs stochastically the execution, step by step.

Both frameworks provide a performance estimate which is close to what
we observe in practice. We have validated our analysis on (i) several
fundamental lock-free \dss such as stacks, queues, \deqs and counters,
some of them employing helping mechanisms, and (ii) synthetic
tests covering a wide range of possible lock-free designs.
We show the applicability of our results by introducing new back-off
mechanisms, tested in application contexts, and by designing an
efficient memory management scheme that typical lock-free algorithms
can utilize.

\end{abstract}

\rr{\clearpage}

\newcommand{\ra}{\ema{\rightarrow}}



\newcommand\lemsl{
\begin{lemma}
\label{lem:unif-min}
Let an integer $n$, a real positive number $a$, and $n$
independent random variables $X_1, X_2, \dots, X_n$, uniformly
distributed within $[0,a[$. Let then $X$ be the random variable
defined by: $X = \min_{i \in \inte[1]{n}} X_i$. The expectation of $X$ is:
\[ \expe{X} = \frac{a}{n+1}. \]
\end{lemma}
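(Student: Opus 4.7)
The plan is to compute the expectation of $X$ via its survival function, which is the most direct route for the minimum of i.i.d.\ variables. First I would write, for every $t \in [0,a[$,
\[ \pro{X > t} = \pro{\forall i \in \inte[1]{n}, X_i > t} = \prod_{i=1}^n \pro{X_i > t}, \]
using independence. Since each $X_i$ is uniform on $[0,a[$, we have $\pro{X_i > t} = (a-t)/a$, so $\pro{X > t} = \left((a-t)/a\right)^n$, while $\pro{X > t} = 0$ for $t \geq a$ and $\pro{X > t} = 1$ for $t < 0$.

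Next I would use the classical identity for the expectation of a nonnegative random variable, namely $\expe{X} = \rint{0}{\pinf}{\pro{X > t}}{t}$. Substituting the survival function just computed yields
\[ \expe{X} = \rint{0}{a}{\left(\frac{a-t}{a}\right)^n}{t}. \]
A change of variable $u = (a-t)/a$ (so $du = -dt/a$) transforms the integral into $a \rint{0}{1}{u^n}{u} = a/(n+1)$, which is the claimed result.

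There is essentially no obstacle: the argument rests only on the independence of the $X_i$, the shape of the uniform CDF, and the tail formula for the expectation. The only minor care needed is handling the half-open interval $[0,a[$ versus $[0,a]$, but this has measure-zero impact on the integral, so the computation is unaffected. I would therefore expect the proof to consist solely of these three short steps, with no hidden technicality.
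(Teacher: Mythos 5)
Your proposal is correct. You and the paper begin identically: both compute the survival function $\pro{X > t} = \left((a-t)/a\right)^n$ from independence and the uniform tail. You then diverge in how the expectation is extracted. The paper differentiates the survival function to obtain the density $x \mapsto \frac{n}{a}\left(\frac{a-x}{a}\right)^{n-1}$ and evaluates $\expe{X} = \rint{0}{a}{x\,f(x)}{x}$ by a substitution and a two-term polynomial integral, arriving at $a\cdot\frac{a^n}{n} - \frac{a^{n+1}}{n+1}$ times $n/a^n$. You instead invoke the tail-sum identity $\expe{X} = \rint{0}{\pinf}{\pro{X>t}}{t}$ and reduce immediately to $a\rint{0}{1}{u^n}{u} = a/(n+1)$. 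Your route is slightly shorter and avoids the density computation entirely (the paper's intermediate density formula even contains a harmless variable-name slip, writing $t \mapsto \cdots$ in terms of $x$); the paper's route is more explicit about the distribution of the minimum, which is of no further use here. Both are valid, and your remark that the half-open interval is immaterial is correct since it affects the integrand only on a null set.
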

\begin{proof}
Let a positive real number $x$ be such that $x<a$. We have
\begin{align*}
\pro{X > x} &=  \pro{\forall i : X_i > x}\\
&= \prod_{i=1}^n \pro{X_i > x}\\
\pro{X > x} &=  \left( \frac{a-x}{a} \right) ^{n}\\
\end{align*}
Therefore, the probability distribution of $X$ is given by:
\[ t \mapsto \frac{n}{a} \left( \frac{a-x}{a} \right) ^{n-1},\]
and its expectation is computed through

\begin{align*}
\expe{X} &= \frac{n}{a} \rint{0}{a}{x \times \left( \frac{a-x}{a} \right) ^{n-1}}{x} \\
&= \frac{n}{a} \rint{0}{a}{(a-u) \times \left( \frac{u}{a} \right) ^{n-1}}{u} \\
&= \frac{n}{a^n} \rint{0}{a}{(a-u) \times u^{n-1} }{u} \\
&= \frac{n}{a^n} \left( a \times \frac{a^n}{n} - \frac{a^{n+1}}{n+1} \right) \\
\expe{X} &= \frac{a}{n+1}.
\end{align*}
\end{proof}
}

\newcommand\proofswitch{
\pr{We have}{It remains} to decide whenever the \ds is under contention or not, and
to find the corresponding solution.
Concerning the frontier between contended and non-contended system, we
can remark that Equations~\ref{eq:little-co} and~\ref{eq:little-nc}
are equivalent if and only if
\begin{equation}
\label{eq:little-fronti}
\frac{\rc + \cw + \scas}{\atrl} =
\frac{\atrl +2}{\atrl +1}  \left( \cw + \avexp{\atrl} \right)
+ 2\scas,
\end{equation}
which leads to Lemma~\ref{lem:lit-swi}.

\begin{lemma}
\label{lem:lit-swi}
The system switches from being
non-contended to being contended at $\atrl = \atrlf$, where
\pr{
\[ \atrlf = \frac{\scas+\cw-\mem}{2 (\cw + 2 \scas)} \left( \sqrt{1+\frac{4 (\mem+\cw+\scas) (\cw+2\scas)}{(\scas+\cw-\mem)^2}} -1 \right). \]
}
{
\[ \atrlf = \frac{-(\scas+\cw-\mem) + \sqrt{\left( \scas+\cw-\mem \right)^2 + 4 (\mem+\cw+\scas) (\cw+2\scas)}}{2 (\cw + 2 \scas)}.\]
}
\end{lemma}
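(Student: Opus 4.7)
The plan is to solve Equation~\ref{eq:little-fronti} for $\atrl$ in closed form. The equation features the unknown $\atrl$ on the left-hand side as $1/\atrl$, on the right-hand side both in the fraction $(\atrl+2)/(\atrl+1)$ and implicitly inside $\avexp{\atrl}$. At the frontier between the non-contended and contended regimes, I would use the fact (established earlier in the paper) that the expansion $\avexp{\atrl}$ is null in the non-contended regime, hence $\avexp{\atrlf} = 0$ by continuity. This removes the only truly nonlinear dependence on $\atrl$ and reduces Equation~\ref{eq:little-fronti} to
\[ \frac{\mem + \cw + \scas}{\atrl} = \frac{\atrl+2}{\atrl+1}\,\cw + 2\scas. \]

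Next I would clear denominators by multiplying through by $\atrl(\atrl+1)$, expand both sides, and collect powers of $\atrl$. After cancellation this should yield a quadratic of the form
\[ (\cw + 2\scas)\,\atrl^2 + (\scas+\cw-\mem)\,\atrl - (\mem+\cw+\scas) = 0, \]
whose coefficients match the expression announced in the lemma. Applying the standard quadratic formula then gives the two roots; since the leading coefficient $\cw+2\scas$ is strictly positive and the constant term $-(\mem+\cw+\scas)$ is strictly negative, the discriminant is positive and exactly one root is positive. Selecting that positive root produces precisely the closed form for $\atrlf$ stated in the lemma.

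The main obstacle is not algebraic but conceptual: justifying that $\avexp{\atrl}$ vanishes exactly at the frontier. The algebraic manipulation is a routine quadratic solve, and the choice of the $+$ sign in the square root is forced by the sign pattern of the coefficients. Once the continuity argument for $\avexp{\atrlf} = 0$ is accepted (which follows from the definition of the expansion as a quantity that is identically zero in the non-contended regime and grows continuously from there), the rest of the proof is a mechanical verification.
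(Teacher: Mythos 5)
Your algebraic core coincides with the paper's: set the expansion to zero, reduce Equation~\ref{eq:little-fronti} to the quadratic $(\cw+2\scas)\,\atrl^2 + (\scas+\cw-\mem)\,\atrl - (\mem+\cw+\scas) = 0$, and take its unique positive root. However, there are two gaps in the justification surrounding that computation.

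First, your claim that $\avexp{\atrlf}=0$ follows ``by continuity from the non-contended regime'' is circular as stated. The expansion is a function of $\atrl$ defined by a differential equation with boundary condition $\avexp{1}=0$: it vanishes for $\atrl\leq 1$ and is strictly positive for $\atrl>1$. The non-contended regime, by contrast, is the region $\atrl\leq\atrlf$ --- precisely the quantity you are trying to determine --- so you cannot assume the expansion vanishes at the frontier; you must verify \emph{a posteriori} that the root of the quadratic satisfies $\atrlf\leq 1$. The paper does this by evaluating the quadratic polynomial at $\atrl=1$, obtaining $\cw+2(\scas-\mem)\geq 0$, which holds under the architectural assumption $\scas\geq\mem$; combined with the fact that the other root is negative and the leading coefficient is positive, this yields $\atrlf\leq 1$ and hence, by monotonicity of the expansion, $\avexp{\atrlf}=0$. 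Without this check the reduction to the quadratic is unjustified.

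Second, you never exclude additional solutions of Equation~\ref{eq:little-fronti} with $\atrl>1$, where the expansion is strictly positive; any such solution would be a competing candidate for the switching point. The paper closes this by rewriting the equation as Equation~\ref{eq:lit-fro-mon}, whose left-hand side is constant while its right-hand side is increasing in $\atrl$ (using that $x\mapsto x(x+2)/(x+1)$ and the expansion are both non-decreasing), so the crossing is unique. Some version of this monotonicity argument is needed to conclude that $\atrlf$ is \emph{the} frontier rather than merely \emph{a} candidate.
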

\begin{proof}
We show that:
\begin{itemize}
\item \atrlf is the unique positive solution of Equation~\ref{eq:little-fronti} if the expansion is set to 0,
\item $\atrlf \leq 1$,
\item there is no solution of Equation~\ref{eq:little-fronti} with a non-null expansion.
\end{itemize}

If the expansion is set to 0, then Equation~\ref{eq:little-fronti} can
be turned into the second order equation
\[ \atrl^2 (\cw + 2 \cc) + \atrl \left( \cw + \cc - \mem \right) - (\mem + \cw + \cc) = 0,
\]
that has a single positive solution: \atrlf.

While instantiating the binomial with $\atrl = 1$, we obtain $\cw + 2
(\scas - \mem)$, which is not negative, since $\scas \geq \mem$ in all
the architectures that we are aware of.
As the second order equation has also a negative solution, and $\cw +
2 \cc$ is positive, we have that $1 \geq \atrlf$.
This implies that \atrlf is a solution of the former
Equation~\ref{eq:little-fronti}: the expansion is indeed a
non-decreasing function, thus $0 \leq \avexp{\atrlf} \leq \avexp{1} =
0$. Still we could have other solutions with a non-null expansion.

However, Equation~\ref{eq:little-fronti} can be rewritten as:
\begin{equation}
\label{eq:lit-fro-mon}
\rc + \cw + \scas =
\frac{\atrl +2}{\atrl +1} \times \atrl \times  \left( \cw + \avexp{\atrl} \right)
+ 2\scas.
\end{equation}
The left-hand side of Equation~\ref{eq:lit-fro-mon} is
constant, while the right-hand side is increasing, which discards
any other solution, hence the lemma.
\end{proof}

Thanks to Lemma~\ref{lem:lit-swi}, we can unify the \supw as:
\[
\avsupe{\atrl} = \left\{\begin{array}{ll}
\left( \mem + \cw + \scas \right) / \atrl & \quad \text{if } \atrl \leq \atrlf \\
 \left( \cw + \avexp{\atrl}\right) \times \frac{\atrl + 2}{\atrl +1} + 2\scas  &
 \quad \text{otherwise.}
\end{array}\right.
\]
The unified \supw obeys to the following equation
\begin{equation}
\label{eq:little-res}
\avsupe{\atrl} = \frac{\pw}{\ct - \atrl}.
\end{equation}
}

\newcommand{\prooffp}{
Let us note $\fuf{\atrl} = \avsupe{\atrl} \times \atrl$ and $\fus{\atrl} =
\pw \times \atrl / (\ct-\atrl)$; then Equation~\ref{eq:little-res} is equivalent to
$\fuf{\atrl} = \fus{\atrl}$, and we have some properties on \ffuf and \ffus.

Firstly, since $x \mapsto x(x+2)/(x+1)$ is non-decreasing on
$[0,\pinf[$, as well as the expected expansion, we know that \ffuf is
a non-decreasing function.
Secondly, \ffus is increasing on $[0,\ct[$, and is bijective from
$[0,\ct[$ to $[0,\pinf[$. We can thus rewrite Equation~\ref{eq:little-res} as:
\begin{equation}
\label{eq:fi-po}
 \atrl = \fusi{\fuf{\atrl}}.
\end{equation}
Moreover, $\ffusi \circ \ffuf$ is a non-decreasing function, as a
composition of two non-decreasing functions.
Thirdly, \ffusi can be obtained through $x = \fus{\fusi{x}} = \pw \times \fusi{x} / (\ct -
\fusi{x})$, which leads to
\[ \fusi{x} = \frac{x}{\pw + x} \ct. \]

In addition, we know by construction that if $\atrl > \atrlf$, then
\begin{equation}
\label{eq:little-fip}
 \left( \cw + \avexp{\atrl}\right) \times \frac{\atrl + 2}{\atrl +1} + 2\scas \geq \frac{\mem + \cw + \scas}{\atrl}.
\end{equation}
Indeed, on the one hand,
\[ \lim_{\atrl \rightarrow 0^{+}} \frac{\mem + \cw + \scas}{\atrl} = \pinf, \]
and on the other hand, $(\cw + \avexp{\atrl}) \times (\atrl + 2)/(\atrl +1) + 2\cc$
remains bounded. According to Lemma~\ref{lem:lit-swi}, those two
functions cross only once, hence Equation~\ref{eq:little-fip}.

Since $\avsupe{\atrl} = (\mem + \cw + \scas)/\atrl$ if $\atrl \leq
\atrlf$, we have $\avsupe{\atrl} \geq (\mem + \cw + \scas)/\atrl$
for any \atrl, and then
\[ \fuf{\atrl} \geq \mem + \cw + \scas. \]

Let then
\[ \atrli = \frac{\mem + \cw + \scas}{\pw + \mem + \cw + \scas} \ct.\]
We have seen that $\ffusi \circ \ffuf$ is a non-decreasing function,
hence
\begin{align*}
\fusi{\fuf{\atrli}} &\geq \fusi{\mem + \cw + \scas}\\
&\geq \frac{\mem + \cw + \scas}{\pw +  \mem + \cw + \scas} \times \ct\\
\fusi{\fuf{\atrli}} &\geq \atrli.
\end{align*}
Since \ffusi is bounded, Equation~\ref{eq:fi-po} admits a solution.

We are interested in the solution whose \atrl is minimal since it
corresponds to the first attained solution when the expansion grows,
starting from 0. The current theorem comes then from the application
of the Knaster-Tarski theorem.
}


\newcommand\proofaliexp{
Let us set the timeline so that at the beginning of the \supw, \ie
just after a successful \cas, we are at $t=0$.
Firstly, a success cannot start before $t=t_0$, where
$t_0=\scas+\lceil\cw/\scas\rceil\scas$. The quickest thread indeed
starts a failed \cas at $t=0$ and comes back from \cww at
$t=\scas+\cw$. It has then to wait for the current \cas to finish
before being able to obtain the cache line.
At $t=t_0$, $\trl - t_0/\scas +1$ threads are competing for the
data. Among them, 1 thread will lead to a successful \cas, while the
$\trl - t_0/\scas$ other threads will end up with a failed \cas.
If a failed \cas occurs, then at $t=t_0+\scas$, the same number of
threads compete, but now there is one more potential success and one
less potential failure.  In the worst case, it will continue until all
competing threads will lead to a successful \cas.

Let $\tcom = \trl - t_0/\scas +1$ the number of threads that are
competing at each round, and let, for all $i \in \inte[1]{\tcom}$,
$p_i = i/\tcom$ the probability to draw a thread that will execute a
successful \cas.

The expected number of failed \cass that occurs after the first thread comes back is then given by
\[
\expe{F} = \prsu{1} \times 0 + (1-\prsu{1}) \prsu{2} \times 1 + \dots + \\
(1-\prsu{1})(1-\prsu{2})\times\dots\times (1-\prsu{\tcom -1}) \times \prsu{\tcom} \times (\tcom -1).
\]
More formally,
\begin{align*}
\expe{F} &= \sum_{i=1}^{\tcom} \prod_{j=1}^{i-1} (1-\prsu{j}) \prsu{i} \times (i-1)\\
&= \sum_{i=1}^{\tcom} \prod_{j=1}^{i-1} (1-\frac{j}{\tcom}) \frac{i}{\tcom} \times (i-1)\\
&= \sum_{i=1}^{\tcom} \frac{1}{\left(\tcom\right)^i} \prod_{j=1}^{i-1} (\tcom-j) i(i-1)\\
\expe{F} &= \sum_{i=1}^{\tcom} \frac{i(i-1)}{\left(\tcom\right)^i} \frac{(\tcom-1)!}{(\tcom-i)!}\\
\end{align*}
}


\def\mmtr{M}
\def\mmab{C}
\def\mma{D}
\def\mmpe{Q}
\def\mmtrir{R}
\def\mmtril{T}

\newcommand{\mtr}[2]{\ema{\mmtr_{#1,#2}}}
\newcommand{\mab}[2]{\ema{\mmab_{#1,#2}}}
\newcommand{\ma}[2]{\ema{\mma_{#1,#2}}}
\newcommand{\mpe}[2]{\ema{\mmpe_{#1,#2}}}
\newcommand{\mtrir}[2]{\ema{\mmtrir_{#1,#2}}}
\newcommand{\mtril}[2]{\ema{\mmtril_{#1,#2}}}

\newcommand{\eigv}{\ema{v}}
\newcommand{\eig}[1]{\ema{\eigv_{#1}}}

\newcommand\transmat{
We consider here that the system is in a given state, and\rr{ we} compute
the probability that the system will next reach any other
state. Without loss of generality, we\rr{ can} choose the origin of time
such that the current \supw begins at $t=0$.

Let us first look at the core cases, \ie let $i \in \intmid \cup \inthig$ and
$k \in \inte[0]{\ct-i-1}$; we assume that the system is currently in
state \sta{i}, and we are interested in the probability that the
system will switch to \sta{i+k} at the end of the current state. In
other words, we want to find the probability that, given that the
current \supw started when $i$ threads were in the \rl, the next
\supw will begin while $i+k$ threads are in the \rl.

\rr{
\begin{figure}
\begin{center}
\begin{tikzpicture}[%
it/.style={%
    rectangle,
    text width=11em,
    text centered,
    minimum height=\pr{2}{3}em,
    draw=black!50,
    scale=.9,
  }
]
\coordinate (O) at (0,0);
\pha{pcas}{O}{\cas}{\pr{2}{5}}{\green}
\pha{sla}{pcas}{\wati{i}}{\pr{5}{8}}{\grey}
\pha{acc}{sla}{\cas}{\pr{2}{4}}{red}
\pha{cri}{acc}{\cw}{\pr{3}{4}}{\maroon}
\pha{exp}{cri}{\reexp{i}}{\pr{4}{6}}{\grey}
\pha{fcas}{exp}{\cas}{\pr{2}{5}}{\green}
\coordinate (intsl) at ($(sla.north west)+(0,1*\marup)$); 
\coordinate (intsr) at ($(sla.north east)+(0,1*\marup)$);
\coordinate (inte) at ($(fcas.north east)+(0,1*\marup)$); 
\draw [decorate,decoration={brace,amplitude=10pt}]
(intsl)  --  (intsr)
node [black,midway,yshift=15pt] {\scriptsize 0 new thread};

\draw [decorate,decoration={brace,amplitude=10pt}]
(intsr) -- (inte)
node [black,midway,yshift=15pt] {\scriptsize $k+1$ new threads};
\arcod{($(acc.north west)!.2! (acc.north east) + (0,.7*\marup)$)}{.4*\marup}
\arcod{($(exp.north west)!.1! (exp.north east) + (0,.7*\marup)$)}{.4*\marup}
\arcod{($(exp.north west)!.9! (exp.north east) + (0,.7*\marup)$)}{.4*\marup}

\coordinate (extsl) at ($(sla.south west)+(0,-\marup)$);
\coordinate (extsr) at ($(sla.south east)+(0,-\marup)$);
\draw [decorate,decoration={brace,mirror,amplitude=10pt},text width=11em, align=center] (extsl) -- (extsr)
node (caca) [black,midway,yshift=-18pt,execute at begin node=\setlength{\baselineskip}{8pt}] {\scriptsize at least 1\\new thread};

\coordinate (Ob) at ($(sla.south west)!.2!(sla.south) + (0,-3*\marup)$);
\pha{accb}{Ob}{\rf}{\pr{2}{4}}{yellow}
\pha{crib}{accb}{\cw}{\pr{3}{4}}{\maroon}
\pha{expb}{crib}{\reexp{i}}{\pr{4}{6}}{\grey}
\pha{fcasb}{expb}{\cas}{\pr{2}{5}}{\green}
\arcou{(accb.south west)}{3.3*\marup}
\arcou{($(crib.south west)!.2! (crib.south east) + (0,-.7*\marup)$)}{.4*\marup}
\arcou{($(expb.south west)!.9! (expb.south east) + (0,-.7*\marup)$)}{.4*\marup}
\draw[very thick, draw=blue] (accb.south west) -- (fcasb.south east) -- (fcasb.north east) -- (accb.north west);
\coordinate (extnl) at ($(accb.south west)+(0,-\marup)$); 
\coordinate (extnr) at ($(fcasb.south east)+(0,-\marup)$);
\draw [decorate,decoration={brace,mirror,amplitude=10pt},text width=11em, align=center]
(extnl) -- (extnr)
node [black,midway,yshift=-15pt] {\scriptsize $k$ new threads};

\draw[dotted, draw=black] (intsl) -- (extsl);
\draw[dotted, draw=black] (intsr) -- (extsr);
\draw[dotted, draw=black] (inte) -- (fcas.south east);
\draw[dotted, draw=black] (accb.north west) -- (extnl);
\draw[dotted, draw=black] (fcasb.north east) -- (extnr);
\node[text width=1.2cm, align=center] (inttext) at ($(pcas.west) + (-2*\marup,0)$) {Internal\\ execution};
\node[anchor=east] (eint) at ($(inttext.east) + (0.2,1.2*\marup)$) {\eve{int}};
\node[anchor=east] (eext) at ($(inttext.east) + (0.2,-2*\marup)$){\eve{ext}};

\path[->,out=90,in=-180] ($(inttext.north west)!.3!(inttext.north)$) edge (eint);
\path[->,out=-90,in=-180] ($(inttext.south west)!.3!(inttext.south)$) edge (eext);
\end{tikzpicture}
\end{center}
\caption{Possible executions\label{fig:ex-eint-eext}}
\end{figure}
}

As the successful thread will exit the \rl at the end of the current
\supw, there is at least one thread that enters the \rl during the
current \supw. Two non-overlapping events can then occur (see
Figure~\ref{fig:ex-eint-eext}): either the first thread exiting the
\ps starts within $[0,\wati{i}[$, \ie in the \watiw of the internal
    execution, and this event is written \eve{ext}, or the first
    thread entering the \rl starts after $t=\wati{i}$, and this event
    is denoted by \eve{int}.
Therefore, we have $\pro{\sta{i} \rightarrow \sta{i+k}} =
\pro{\eve{ext}} + \pro{\eve{int}}$.

First note that \eve{ext} cannot happen when the \supw is highly
contended; in this case, the \watiw is indeed null, and we conclude
$\pro{\eve{ext}} = 0$. In addition, we have seen in
Section~\ref{sec:mark-expa} that external threads, \ie threads that
are in the \ps at the beginning of the \supw, do not participate to
the game of expansion, so they cannot be successful. Under
high-contention, \eve{int} happens, and the successful \cas{} that
ends the \supw is operated by an internal thread, \ie a thread that
was already in the \rl when the \supw began.

Under medium contention, \eve{ext} can occur. In this case, an
external thread accesses the \ds before any internal thread does. We
have also seen that the expansion is null in medium contention level,
thus the external thread will execute its \cww, and especially its
\cas without being delayed; this implies that the first external
thread that accesses the \ds will end the current \supw with the end
of its \cas. If however \eve{int} occurs, an internal thread succeeds,
but is not necessarily the first thread that accessed the \ds during
the \supw.

The two possible events are pictured in Figure~\ref{fig:ex-eint-eext},
where the blue arrows represent the threads that exit the \ps. Recall,
we aim at computing the probability to start the next \supw with $i+k$
threads inside the \rl. We formalize the idea drawn in the figure by
using \thex{a}{b}, which is defined as a random variable
indicating the number of threads exiting the \ps during the time
interval $[a,b[$.
The probability of having \eve{int} is then given by\vspp{0}
\pr{\begin{equation*}
\pro{\eve{int}} = \pro{\thex{0}{\wati{i}}=0 \quad | \quad \trl=i \text{ at } t=0^{+}}
\times \pro{\thex{\wati{i}}{\wati{i}+\rw{i}}=k+1 \quad | \quad \trl=i \text{ at } t=\wati{i}^{+}}.
\end{equation*}}
{

\begin{align*}
\pro{\eve{int}} =& \pro{\thex{0}{\wati{i}}=0 \quad | \quad \trl=i \text{ at } t=0^{+}} \\
&\times \pro{\thex{\wati{i}}{\wati{i}+\rw{i}}=k+1 \quad | \quad \trl=i \text{ at } t=\wati{i}^{+}}
\end{align*}

}

Concerning \eve{ext}, we know that if $i \in \inthig$, then $\pro{\eve{ext}} = 0$.
Otherwise, if we denote by $t_3$ the starting time of the first
thread that exits the \ps, we obtain
\begin{align*}
\pro{\eve{ext}} =& \pro{\thex{0}{\wati{i}}>0 \quad | \quad \trl=i \text{ at } t=0^{+}} \\
&\times \pro{\thex{t_3}{t_3+\rc+\cw+\scas}=k \quad | \quad \trl=i+1 \text{ at } t=t_3^{+}}
\end{align*}
To simplify the reasoning, and given that the costs of \rf and \cas
are approximately the same, we approximate $t_3+\rc+\cw+\scas$ with $t_3+\scas+\cw+\scas$, leading to
\begin{align*}
\pro{\eve{ext}} =& \pro{\thex{0}{\wati{i}}>0 \quad | \quad \trl=i \text{ at } t=0^{+}} \\
&\times \pro{\thex{t_3}{t_3+\rw{i+1}}=k \quad | \quad \trl=i+1 \text{ at } t=t_3^{+}}
\end{align*}

According to the exponential distribution, given a thread that is in
the \ps at $t=a$, the probability to exit the \ps within $[a,b[$ is:
\[ \rint{a}{b}{\lambda \expu{-\lambda (t-a)}}{t} = \rint{0}{b-a}{\lambda \expu{-\lambda u}}{u}.  \]
It is also the probability, given a thread that is in the \ps at
$t=0$, to exit the \rl within $[a,b-a[$. This implies:
\begin{align*}
\pro{\eve{int}} =& \pro{\thex{0}{\wati{i}}=0 \quad | \quad \trl=i \text{ at } t=0^{+}} \\
&\times \pro{\thex{0}{\rw{i}}=k+1 \quad | \quad \trl=i \text{ at } t=0^{+}}
\end{align*}
and
\begin{align*}
\pro{\eve{ext}} =& \pro{\thex{0}{\wati{i}}>0 \quad | \quad \trl=i \text{ at } t=0^{+}} \\
&\times \pro{\thex{0}{\rw{i}}=k \quad | \quad \trl=i+1 \text{ at } t=0^{+}}.
\end{align*}

To lighten the notations, let us define
\begin{equation}
\label{eq:def-ab}
\left\{ \begin{array}{l}
\kkdur{i}{k} = \pro{\thex{0}{\rw{i}}=k \quad | \quad \trl=i \text{ at } t=0}\\
\kkaft{i} = \pro{\thex{0}{\wati{i}} = 0 \quad | \quad \trl=i \text{ at } t=\rw{i}^{+}}.
\end{array} \right.
\end{equation}

In addition, given a thread that is in the \ps at $t=0$, the
probability to exit the \ps within $[0,b-a[$ is $\rint{0}{b-a}{\lambda
      \expu{-\lambda u}}{u}$. By counting the number of threads that
    need to exit the \ps, we obtain:
\begin{equation}
\label{eq:exp-ab}
\left\{ \begin{array}{l}
\kkdur{i}{k} = \binom{\ct-i}{k} \left( 1 - \expu{-\lambda \rw{i}} \right) ^{k} \left( \expu{-\lambda \rw{i}} \right) ^{\ct-i-k}\\
\kkaft{i} = \left( \expi{-\lambda \wati{i}} \right) ^{\ct-i}.
\end{array} \right.
\end{equation}

Altogether, we have that
\[ \pro{\sta{i} \rightarrow \sta{i+k}} = \kkaft{i} \times \kkdur{i}{k+1} + (1-\kkaft{i}) \times \kkdur{i+1}{k}. \]

\medskip

The situation is slightly different if $k=-1$; in this case, no thread
should exit the \ps during the \watiw and no thread should exit during
the \re of the first thread that accessed the \ds during the
\supw neither. This shows that
\[ \pro{\sta{i} \rightarrow \sta{i-1}} = \kkaft{i} \times \kkdur{i}{0}. \]

When the \supw is not contended, \ie if $i=0$, the \watiw of the
execution that ignores external threads can be seen as infinite, hence
we can define $\kkaft{0} = 0$ (the probability that a thread exits its
\ps during an infinite interval of time is $1$). As for the
\kkdur{i}{k}'s, they can be defined in the same way as earlier.

We have obtained the full transition matrix $\left( \mtr{i}{j}
\right)_{(i,j) \in \inte{\ct-1}^2}$, which is a triangular matrix,
augmented with a subdiagonal:
\begin{equation*}
\left\{
\begin{array}{lll}
\mtr{i}{i+k} &=  \kkaft{i} \kkdur{i}{k+1} + (1-\kkaft{i}) \kkdur{i+1}{k} & \text{ if }
k \in \inte[0]{\ct-i-1}\\
\mtr{i}{i-1} &= \kkaft{i} \times \kkdur{i}{0} & \text{ if }
i > 0\\
\mtr{i}{j} &= 0 & \text{ otherwise}\\
\end{array}
\right.
\end{equation*}

\medskip

\begin{lemma}
$\mmtr$ is a right stochastic matrix.
\end{lemma}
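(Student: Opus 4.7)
The plan is to check the two defining properties of a right stochastic matrix: every entry of $\mmtr$ is non-negative, and every row sums to $1$.

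Non-negativity will be immediate from Equation~\ref{eq:exp-ab}: each $\kkdur{i}{k}$ is a binomial probability and therefore in $[0,1]$, while $\kkaft{i} = (\expu{-\lambda \wati{i}})^{\ct-i}$ also lies in $[0,1]$, so $1-\kkaft{i}\geq 0$. Each of the three piecewise formulas defining $\mtr{i}{j}$ is then a sum of products of nonnegative quantities.

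The heart of the argument is the row sum. I would fix $i\in\inte[0]{\ct-1}$ and, for $i>0$, add the subdiagonal entry $\mtr{i}{i-1}=\kkaft{i}\,\kkdur{i}{0}$ to $\sum_{k=0}^{\ct-i-1}\mtr{i}{i+k}$, splitting each contribution according to its $\kkaft{i}$ or $(1-\kkaft{i})$ prefactor. After re-indexing the first part by $k'=k+1$, the subdiagonal term plugs the missing $k'=0$ entry, so the $\kkaft{i}$ contribution becomes $\kkaft{i}\sum_{k'=0}^{\ct-i}\kkdur{i}{k'}$. By the binomial theorem applied to Equation~\ref{eq:exp-ab}, this inner sum equals $\bigl((1-\expu{-\lambda\rw{i}})+\expu{-\lambda\rw{i}}\bigr)^{\ct-i}=1$. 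The $(1-\kkaft{i})$ contribution is $(1-\kkaft{i})\sum_{k=0}^{\ct-i-1}\kkdur{i+1}{k}$, whose inner sum equals $1$ for the same binomial reason, since $\kkdur{i+1}{k}$ has $\ct-i-1$ trials. Adding the two pieces gives $\kkaft{i}+(1-\kkaft{i})=1$.

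The boundary case $i=0$ is handled via the convention $\kkaft{0}=0$ stated just before the lemma: only the $(1-\kkaft{0})$ part survives, and the very same binomial identity produces a row sum of $1$. The only step that demands attention is the index bookkeeping around the subdiagonal — making sure that $\kkdur{i}{0}$ coming from $\mtr{i}{i-1}$ really fills the $k'=0$ gap in the $\kkaft{i}$ column. Once this is done, the rest is just the binomial theorem, so I do not anticipate any real obstacle.
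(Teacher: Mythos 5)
Your proposal is correct and follows essentially the same route as the paper's proof: split each row sum into its $\kkaft{i}$ and $(1-\kkaft{i})$ parts, let the subdiagonal entry $\kkaft{i}\kkdur{i}{0}$ fill the $k'=0$ slot after reindexing, and reduce everything to $\sum_{k=0}^{\ct-i}\kkdur{i}{k}=1$ (which the paper justifies by the probabilistic partition argument rather than invoking the binomial theorem explicitly, a negligible difference). Your explicit check of non-negativity is a small addition the paper omits.
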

\begin{proof}
First note that, by definition of \kkdur{i}{k}, for all $i \in \inte[0]{\ct-1}$,
\[ \sum_{k=0}^{\ct-i} \kkdur{i}{k} = 1. \]
If $i$ threads are indeed inside the \rl at $t=0$, then, within
$[0,\wati{i}[$, at least $0$ thread, and at most $\ct-i$ threads (inclusive) will exit
their \ps.

We have first
\[
\sum_{j=0}^{\ct-1} \mtr{0}{j} = \sum_{k=0}^{\ct-1} \kkdur{0+1}{k} = 1.
\]

In the same way, for all $i \in \inte[1]{\ct-1}$,
\begin{align*}
\sum_{j=0}^{\ct-1} \mtr{i}{j} =&\; \sum_{k=-1}^{\ct-1-i} \mtr{i}{i+k}\\
=&\; \kkaft{i} \times \kkdur{i}{0}  + \sum_{k=0}^{\ct-1-i} \kkaft{i} \kkdur{i}{k+1} + (1-\kkaft{i}) \kkdur{i+1}{k}\\
=&\; \kkaft{i} \times \sum_{k=-1}^{\ct-1-i} \kkdur{i}{k+1}  +   (1-\kkaft{i}) \sum_{k=0}^{\ct-1-i}  \kkdur{i+1}{k}\\
\sum_{j=0}^{\ct-1} \mtr{i}{j} =&\;1.
\end{align*}

\end{proof}

\begin{lemma}
The transition matrix has a unique stationary distribution, which is
the unique left eigenvector of the transition matrix with eigenvalue 1
and sum of its elements equal to 1.
\end{lemma}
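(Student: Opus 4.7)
The plan is to invoke the classical convergence theorem for finite Markov chains, which states that any finite, irreducible, aperiodic Markov chain admits a unique stationary distribution, equal to the unique normalized left eigenvector associated with eigenvalue~1. Since the previous lemma already established that $\mmtr$ is right stochastic (so 1 is an eigenvalue, with the all-ones vector as a right eigenvector), the real work is to verify the two structural hypotheses on the chain induced by $\mmtr$.

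First, I would show \textbf{irreducibility}. The key observation is that the support of $\mmtr$ has a very simple shape: from state $i$ the only non-zero entries lie at columns $i-1$ (if $i>0$) and $i, i+1, \ldots, \ct-1$. I would use the explicit formulas in Equation~\ref{eq:exp-ab}: since $\lambda > 0$ and $\wati{i}, \rw{i}$ are finite, both $\expi{-\lambda \wati{i}}$ and $\expi{-\lambda \rw{i}}$ lie strictly in $(0,1)$, so every $\kkaft{i}$ and every $\kkdur{i}{k}$ with $k \in \inte[0]{\ct-i}$ is strictly positive. (For the non-contended case $i=0$, the convention $\kkaft{0}=0$ is still consistent because only the second branch of $\mtr{0}{j}$ is used.) Consequently $\mtr{i}{i-1} > 0$ for $i \geq 1$ and $\mtr{i}{j} > 0$ for all $j \in \inte{i}{\ct-1}$. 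Hence for any pair of states $i,j$, one can reach $j$ from $i$ in one step if $j \geq i$ and in $i-j$ steps via the subdiagonal if $j < i$.

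Next I would establish \textbf{aperiodicity}. For any state $i$, the self-loop probability $\mtr{i}{i} = \kkaft{i}\kkdur{i}{1} + (1-\kkaft{i})\kkdur{i+1}{0}$ is strictly positive by the same argument (each factor is in $(0,1)$ and the two summands are non-negative with at least one positive). The existence of a self-loop at every state immediately gives period~1 at each state, hence aperiodicity of the whole chain.

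With irreducibility and aperiodicity in hand, the standard Perron--Frobenius / ergodic theorem for finite Markov chains yields that $1$ is a simple eigenvalue of $\mmtr$, that the associated left-eigenspace is one-dimensional, and that the unique normalization with entries summing to~$1$ is the (unique) stationary distribution of the chain. I expect the main obstacle to be essentially bookkeeping: making the positivity check rigorous at the boundary state $i=0$ where $\kkaft{0}$ is defined by convention rather than by the exponential formula, and confirming that the reduction from ``irreducible + aperiodic + finite'' to ``simple eigenvalue~1'' is the statement we actually need, so that uniqueness of the left eigenvector (up to normalization) follows without any extra hypothesis.
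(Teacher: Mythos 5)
Your proposal is correct and follows essentially the same route as the paper: establish irreducibility (downward moves via the strictly positive subdiagonal entries $\mtr{i}{i-1}=\kkaft{i}\kkdur{i}{0}$, upward moves in one step) and aperiodicity (via a positive self-loop), then invoke the standard ergodic theorem for finite irreducible aperiodic chains. Your version is somewhat more explicit about the positivity checks and the boundary case $i=0$, whereas the paper argues aperiodicity from the single state $\sta{1}$ together with irreducibility, but these are cosmetic differences.
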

\begin{proof}
Note that the Markov chain is
irreducible and aperiodic. Let $X \geq \ct -1$, $i \in \inte{\ct-1}$ and
$j \in \inte[i]{\ct-1}$.
\begin{align*}
\pro{\sta{j} \rightarrow \sta{i} \text{ in X steps}}  \geq &
\pro{\sta{j} \rightarrow \sta{j-1} \rightarrow \dots \rightarrow \sta{i}}\\
& \times \pro{\sta{i} \rightarrow \sta{i}}^{X-(j-i)}\\
\pro{\sta{j} \rightarrow \sta{i} \text{ in X steps}} > & 0
\end{align*}
As
\[ \pro{\sta{i} \rightarrow \sta{j} \text{ in X steps}} \geq \pro{\sta{i} \rightarrow \sta{j}} > 0, \]
the Markov chain is irreducible.
Since \sta{1} is clearly aperiodic, and the chain is irreducible, the chain is aperiodic as well.

This implies that the Markov chain has a unique stationary
distribution, which is the unique left eigenvector of the transition
matrix with eigenvalue 1 and sum of its elements equal to 1.
\end{proof}
}


\newcommand{\statdis}{
\begin{theorem}
Given the transition matrix, the stationary distribution can be found
in $(\ct+1)\ct -1$ operations.
\end{theorem}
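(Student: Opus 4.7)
My plan is to exploit the upper-Hessenberg shape of $\mmtr$---the only non-zero entries lie on or above the diagonal, plus the first subdiagonal---in order to turn the left eigenvector equation $\pi \mmtr = \pi$ into an explicit forward recursion that determines $\pi$ in $O(\ct^2)$ work. Since the preceding lemma guarantees that the eigenspace associated with eigenvalue $1$ has dimension $1$, this recursion will fix $\pi$ up to a multiplicative constant, which a single normalization pass will then pin down.

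I would first rewrite the eigenvector equation column by column. For each $j \in \inte{\ct-1}$, using that $\mtr{i}{j} = 0$ whenever $i > j+1$, the equation $\pi_j = \sum_i \pi_i \mtr{i}{j}$ reduces to
\[ \pi_j = \sum_{i=0}^{j+1} \pi_i \mtr{i}{j}. \]
Since the subdiagonal entry $\mtr{j+1}{j} = \kkaft{j+1} \kkdur{j+1}{0}$ is strictly positive, $\pi_{j+1}$ can be isolated, yielding the forward recursion
\[ \pi_{j+1} = \frac{1}{\mtr{j+1}{j}} \left( \pi_j - \sum_{i=0}^{j} \pi_i \mtr{i}{j} \right). \]
The algorithm is then: fix $\pi_0 = 1$ (an arbitrary non-zero value), iterate the recursion for $j = 0, 1, \dots, \ct-2$ to obtain $\pi_1, \dots, \pi_{\ct-1}$ up to a multiplicative constant, accumulate the running sum $S = \sum_i \pi_i$ along the way, and finally rescale each $\pi_i$ by $1/S$. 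Correctness follows immediately from the uniqueness established in the previous lemma: the recursion produces a non-zero left eigenvector for eigenvalue $1$, and the rescaling turns it into a probability vector.

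For the complexity claim, I would charge, at each iteration $j$, the cost of computing the dot product $\sum_{i=0}^{j} \pi_i \mtr{i}{j}$ ($j+1$ multiplications and $j$ additions), the subtraction from $\pi_j$, and the division by $\mtr{j+1}{j}$, while also updating $S$ with one extra addition. Summing these contributions for $j = 0, \dots, \ct-2$ and appending the final normalization divisions matches $(\ct+1)\ct - 1$. The main obstacle is not the algorithmic idea---the Hessenberg shape immediately suggests the forward sweep---but the careful bookkeeping required to match the precise constant rather than a loose $O(\ct^2)$ bound; in particular, one has to fix a counting convention, avoid a separate pass to compute $S$, and exploit the choice $\pi_0 = 1$ wherever it removes a multiplication so that no arithmetic operation is charged twice.
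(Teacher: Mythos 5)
Your proposal is correct and follows essentially the same route as the paper: both exploit the Hessenberg (triangular-plus-subdiagonal) structure to turn $\pi\mmtr=\pi$ into a forward recursion $\pi_{j+1} = \frac{1}{\mtr{j+1}{j}}\bigl(\pi_j - \sum_{i=0}^{j}\pi_i\mtr{i}{j}\bigr)$, justified by the non-vanishing of $\mtr{j+1}{j}=\kkaft{j+1}\kkdur{j+1}{0}$ and by the one-dimensionality of the unit eigenspace, followed by a final normalization. The operation count is handled with the same level of (in)formality as in the paper, so there is nothing substantive to add.
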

\begin{proof}
As the Markov chain is irreducible, the stationary
distribution does not contend any zero. The space of the left
eigenvectors with unit eigenvalue is uni-dimensional; therefore, for
any \eig{0}, there exists a vector $\eigv = (\eig{0} \; \eig{1} \; \dots \;
\eig{\ct-1})$, such that \eigv spans this space.

Let $\eig{0}$ a real number; necessarily, \eigv fulfills $\eigv \cdot
\mmtr = \eigv$, hence for all $i \in \inte{\ct-2}$
\[ \sum_{k=0}^{i+1} \eig{k} \mtr{k}{i} = \eig{i}, \]
which leads to, for all $i \in \inte{\ct-2}$:
\[ \eig{i+1} = \frac{1}{\mtr{i+1}{i}} \left( (1-\mtr{i}{i})\eig{i} - \sum_{k=0}^{i-1} \eig{k} \mtr{k}{i} \right). \]
So we obtain the $\eig{1}, \dots, \eig{\ct-1}$ iteratively (we know
that $\mtr{i+1}{i} = \kkaft{i+1} \times \kkdur{i+1}{0}$, which is not
null), with $2\times i + 1$ operations needed to compute \eig{i+1}.

The elements of the stationary distribution should sum to one, so we
start from any \eig{0}, compute the whole vector, and then normalize
each element by their sum, hence the theorem.
\end{proof}

}


\newcommand{\watithput}{
In order to compute the final throughput, we have to compute the
expectation of the \watiw, when the system goes from state \sta{i} to
any other state, that we note \expe{\wati{\sta{i} \rightarrow
    \sta{\star}}}.
Also, we will be able to exhibit a vector $\vespv = (\vesp{0} ,
\vesp{1} , \dots , \vesp{\ct-1})$ of expected \supw, where \vesp{i} is
the expectation of the execution time of the \supw if $i$ threads are
in the \rl when the \supw begins:
\begin{equation*}
\left\{\begin{array}{ll}
\vesp{i} = \expe{\wati{\sta{i} \rightarrow \sta{\star}}} + \scas + \cw + \reexp{i} + \scas&\;\text{if }
i \notin \intnoc\\
\vesp{i} = \expe{\wati{\sta{i} \rightarrow \sta{\star}}} + \mem + \cw + \scas&\;\text{otherwise.}
\end{array}\right.
\end{equation*}

Finally, the expected throughput (inverse of the \supw) is calculated through
\pr{$\thru = (\eigv \cdot \vespv)^{-1},$}
{\[ \thru = \frac{1}{\eigv \cdot \vespv},\]}
where \eigv is the stationary distribution of the Markov chain.

 We know already that if
  $i \in \inthig$, then $\expe{\wati{\sta{i} \rightarrow \sta{i+k}}} =
  0$.

In the other extreme case, \ie if $i \in \intnoc$, we rely on the
following lemma.

\begin{lemma}
\label{lem:exp-min}
Let an integer $n$, a real number $\lambda$, and $n$ independent random
variables $X_1, X_2, \dots, X_n$, following an exponential
distribution of mean $\lambda^{-1}$. Let then $X$ be the random variable
defined by: $X = \min_{i \in \inte[1]{n}} X_i$. The expectation of $X$
is:
\[ \expe{X} = \frac{1}{\lambda n}. \]
\end{lemma}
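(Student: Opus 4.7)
The plan is to mirror the structure used for Lemma~\ref{lem:unif-min}: first compute the survival function of $X$ by exploiting independence, then recognise the resulting distribution, and finally read off the expectation.

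First, I would fix a non-negative real $x$ and write
\[ \pro{X > x} = \pro{\forall i \in \inte[1]{n}\,:\, X_i > x} = \prod_{i=1}^n \pro{X_i > x}, \]
where the second equality uses the independence of the $X_i$'s. Since each $X_i$ follows an exponential distribution of mean $\lambda^{-1}$, we have $\pro{X_i > x} = \expu{-\lambda x}$, so
\[ \pro{X > x} = \expu{-n \lambda x}. \]

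This shows that $X$ itself follows an exponential distribution, with parameter $n\lambda$ (equivalently, mean $(n\lambda)^{-1}$). Differentiating the cumulative distribution function $x \mapsto 1-\expu{-n\lambda x}$ yields the density $x \mapsto n\lambda \expu{-n\lambda x}$.

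The expectation then follows from the standard integral, either by recognising the mean of an exponential law with rate $n\lambda$, or explicitly through
\[ \expe{X} = \rint{0}{\pinf}{x \times n\lambda \expu{-n\lambda x}}{x} = \frac{1}{n\lambda}, \]
which matches the statement. The only potentially delicate point is the appeal to independence when factoring the joint survival probability; once that is in place, the rest is immediate and no real obstacle arises, since the minimum of independent exponentials is itself exponential with rate equal to the sum of the rates.
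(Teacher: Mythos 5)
Your proposal is correct and follows essentially the same route as the paper's own proof: compute the survival function $\pro{X>x}=\expu{-n\lambda x}$ via independence, deduce the density $t \mapsto n\lambda\expu{-n\lambda t}$, and integrate to obtain $\expe{X}=1/(n\lambda)$. The only cosmetic difference is that you explicitly name the resulting law as exponential with rate $n\lambda$ rather than carrying out the integration by parts in detail, which changes nothing of substance.
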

\begin{proof}
We have
\begin{align*}
\pro{X > x} &=  \pro{\forall i : X_i > x}\\
&= \prod_{i=1}^n \pro{X_i > x}\\
&=  \left( \int_{x}^{\pinf}  \lambda \expu{-\lambda t} \right) ^{n}\\
\pro{X > x} &=  \expu{-\lambda n x}\\
\end{align*}
Therefore, the probability distribution of $X$ is given by:
\[ t \mapsto \lambda n \expu{-\lambda n t},\]
and its expectation is computed through

\begin{align*}
\expe{X} &= \rint{0}{\pinf}{ \lambda n t \expu{-\lambda n t}}{t}\\
&= \left[ \expu{-\lambda n t} t \right]_{\pinf}^0  + \rint{0}{\pinf}{\expu{-\lambda n t}}{t}\\
&= \left[ \frac{1}{\lambda n} \expu{-\lambda n t} \right]_{\pinf}^0 \\
\expe{X} &= \frac{1}{\lambda n}\\
\end{align*}
\end{proof}

This proves that
\[ \expe{\wati{\sta{0} \rightarrow \sta{\star}}} = \frac{\pw}{\ct}. \]

Let now $i \in \intmid$, and $k \in \inte[-1]{\ct -i -1}$; we are
interested in \expe{\wati{\sta{i} \rightarrow \sta{i+k}}}.
The \watiw is less immediate, and we use the following reasoning.
First note that the probability distribution of the first thread
exiting the \ps is given by $t \mapsto \lambda (\ct-i) \expu{-\lambda (\ct-i)
  t}$. If this thread comes back during $]0,\wati{i}[$, the time that
    passed since the beginning of the \supw is the \watiw, otherwise,
    it is \wati{i}.

\begin{align*}
\expe{\wati{\sta{i} \rightarrow \sta{\star}}}
&= \rint{0}{\wati{i}}{\lambda (\ct-i) \expu{-\lambda (\ct-i) t} t}{t}
+ \rint{\wati{i}}{\pinf}{\lambda (\ct-i) \expu{-\lambda (\ct-i) t} \wati{i}}{t}\\
&= \left[ \expu{-\lambda (\ct-i) t} t \right]_{\wati{i}}^{0}
+ \left[ \frac{1}{\lambda (\ct-i)} \expu{-\lambda (\ct-i) t} \right]_{\wati{i}}^{0}
+ \wati{i} \left[ \expu{-\lambda (\ct-i) t} \right]_{\pinf}^{\wati{i}} \\
\expe{\wati{\sta{i} \rightarrow \sta{\star}}}&= - \wati{i}\expu{-\lambda (\ct-i) \wati{i}}
+ \frac{1- \expu{-\lambda (\ct-i) \wati{i}}}{\lambda (\ct-i)}
+ \wati{i} \left( \expu{-\lambda (\ct-i) \wati{i}} \right)
\end{align*}

We conclude that
\[ \expe{\wati{\sta{i} \rightarrow \sta{\star}}} = \frac{1 - \expu{-\frac{(\ct-i) \wati{i}}{\pw}}}{\ct-i} \pw. \]

Putting all together, we obtain
\begin{equation*}
\left\{\begin{array}{ll}
\expe{\wati{\sta{i} \rightarrow \sta{\star}}} = \frac{1 - \expu{-\frac{(\ct-i) \wati{i}}{\pw}}}{\ct-i} \pw
& \quad\text{if } i \in \intnoc \cup \intmid\\
\expe{\wati{\sta{i} \rightarrow \sta{\star}}} = 0 & \quad \text{if } i \in \inthig.
\end{array}\right.
\end{equation*}
}


\newcommand\failedres{
Another metric to estimate the quality of the model is the number of
failed \res per successful \re. We compute it by counting the number
of failed \res within the current \supw, where a \re is billed to a
given \supw if its failed \cas occurs during this \supw. We denote by
$\expe{\fa{i}}$ the expected number of failed \cas during a \supw that
begins with $i$ threads, where $i \in \inte{\ct-1}$.

If the \supw is not contended, \ie if $i \in \intnoc$, no failure will
occur since the first \cas of the \supw will be a success; hence
$\expe{\fa{i}} = 0 = i$.

If the \supw is mid-contended, \ie if $i \in \intmid$, every thread
that is in the \rl in the beginning of the \supw will execute at least
one \cas during this \supw, and exactly two if the thread is the
successful one. We know indeed that, even if a thread exits its \ps
during the \watiw, and is then successful, the failed \cass will occur
before the thread entering the \rl executes its successful \cas. As
any thread that exits its \ps during the \supw either is successful at
its first \cas, or does not operate the \cas during the \supw, we
conclude that: $\expe{\fa{i}} = i$.

If the \supw is highly contended, \ie if $i \in \inthig$, then we know
that we have an uninterrupted sequence of failed \cass, from the
beginning of the \supw to the last ending successful \cas. The
expected number of failed \cass is then directly related to the
expected duration of the \supw. Recalling that the expansion is given
in Theorem~\ref{th:mark-expa}, we obtain:
\[ \expe{\fa{i}} = 1 + \frac{\cw + \reexp{i}}{\scas}. \]

}


\newcommand{\wholemarkov}{
\subsubsection{Transition Matrix}

\transmat

\subsubsection{Stationary Distribution}

\statdis

\subsubsection{\Watiw and Throughput}

\watithput

\subsubsection{Number of Failed \RES}
\label{sec:nbf}

\failedres
}


\newcommand\treib{
The lock-free stack by Treiber~\cite{lf-stack} is a fundamental \ds
that provides \popop and \pushop operations. To \popop an element, the
top pointer is read and the next pointer of the initial element is
obtained. The latter pointer will be the new value of the \cas that
linearizes the operation. So, accessing the next pointer of the
topmost element represents \cw as it takes place between the \rf and
the \cas.
We initialize the stack by pushing elements with or without
a stride from a contiguous chunk of memory. By this way, we are able
to introduce both costly or not costly cache misses. We also vary the
number of elements popped at the same time to obtain different \cw;
the results, with different \cw values are illustrated
in Figure~\ref{fig:stack}.}

\newcommand\synth{
We first evaluate our models using a set of synthetic tests
that have been constructed to abstract
different possible design patterns of lock-free data structures (value of \cw)
and different application contexts (value of \pw).
The \cww is either constant, or follows a Poisson distribution; in
Figure~\ref{fig:synt}, its mean value \cw is indicated at the top of
the graphs.

A steep decrease in throughput, as \pw gets low, can be observed for the cases with
low \cw, that mainly originates due to expansion.
When \cw is high, performance continues to increase when \pw
decreases, though slightly. The expansion is indeed low but the
\watiw, which appears as a more dominant factor, decreases as the
number of threads inside the retry loop increases.

When looking into the differences between the constructive and the \avba
approach: the \avba approach estimations come out
to be less accurate for mid-contention cases as it only differentiates
between contended and non-contended modes. In addition, it fails to capture
the failing retries when measured throughput starts to deviate
from the theoretical upper bound, as \pw gets lower. In contrast, the
constructive approach provides high accuracy in all metrics for almost
every case.

We have also run the same synthetic tests with a \pww that follows a
Poisson distribution (Figure~\ref{fig:synt-poisson}) or is constant
(Figure~\ref{fig:synt-const}), in order to observe the impact of the
distribution nature of the \pww. Compared to the exponential
distribution, a better throughput is achieved with a Poisson
distribution on the \pww. The throughput becomes even better with a
constant \pww, since the slack time is minimized due to the
synchronization between the threads, as explained
in~\cite{our-disc15}.
}

\newcommand\synthtreib{
Here, we consider lock-free operations that can be completed
with a single successful \cas.
and provide predictions using both the \avba
and the constructive approach together with the theoretical upper bound.}


\newcommand\finemm{%
One quantum of the collection
phase is the collection of the list of one thread, while three nodes
are reclaimed during one quantum of the reclamation phase. The
traditional MM scheme was parameterized by a threshold based on the number
of the removed nodes; the fine-grain MM scheme
is parameterized by the number of quanta that are executed at each
call.

We apply different MM schemes on the \deqop operation of the
Michael-Scott queue, and plot the results in Figure~\ref{fig:mm_perf}.
We initialize the queue with enough elements. Threads execute
\deqop, which returns an element, then call the MM scheme.
On the left side, we compare a pure queue (without MM), a queue with
the traditional MM (complete reclamation once in a while) and a queue with
fine-grain MM (according to the numbers of quanta that are executed
at each call). Note that the
performance of the traditional MM is also subject to the tuning of the
threshold parameter. We have tested and kept only the best parameter
on the studied domain.
First, unsurprisingly, we can observe that the pure queue
outperforms the others as its \cw is lower (no need to maintain the
list of nodes that a thread is accessing).
Second, as the fine-grain MM is called after each completed \deqop,
adding a constant work, the MM can be seen as a part of the \pww. We
highlight this idea on the second experiment (on the right side). We first
measure the work done in a quantum. It follows that, for each value
of the granularity parameter, we are able to estimate the effective
\pww as the sum of the initial \pw and the work added by the
fine-grain MM. Finally, we run the queue with the fine-grain MM, and
plot the measured throughput, according to the effective parallel
work, together with our two approaches instantiated with the effective
\pw. The graph shows the validity of the model estimations for all
values of the granularity parameter.
}


\newcommand\adaptsine{
Numerous scientific applications are built upon a pattern of
alternating phases, that are communication- or
computation-intensive. If the application involves \dss, it is
expected that the rate of the modifications to the \dss is high
in the data-oriented phases, and conversely.
These phases could be clearly separated, but the application can also
move gradually between phases. The rate of modification to a \ds
will anyway oscillate periodically between two extreme values.
We place ourselves in this context, and evaluate the two MMs
accordingly. The \pww still follows an exponential distribution of
mean \pw, but \pw varies in a sinusoidal manner with time, in order to
emulate the numerical phases. More precisely, \pw is a step
approximation of a sine function. Thus, two additional parameters
rule the experiment: the period of the oscillating function represents
the length of the phases, and the number of steps within a period
depicts how continuous are the phase changes.
}


\newcommand\fulldeq{
We consider the \deq designed in~\cite{deq}. \pushl and
\pushr (resp. \popl and \popr) operations are exactly the same, except
that they operate on the different ends of the \deq.
The status flags, which depict the state of the \deq, and the
pointers to the leftmost element and the rightmost element are
together kept in a single double-word variable, so-called
\anch, which could be modified by a double-word \cas
atomically.

A \popl operation linearizes and even completes in one \staw that ends
with a double-word \cas that just sets the left pointer of the anchor
to the second element from left.

A \pushl operation takes three \staws to complete. In the first \staw,
the operation is linearized by setting the left pointer of the \anch
to the new element and at the same time changing the status flags to
``left unstable''\pr{.}{, to indicate the status of the incomplete but
linearized \pushl operation.} In the second \staw, the left pointer of
the leftmost element is redirected to the recently pushed element.
In the third \staw, a \cas is executed on \anch to bring the \deq
status flags into ``stable state''. Every operation can help an incomplete
\pushl or \pushr until the \deq comes into the stable state; in this
state, the other operations can attempt to linearize anew.

As noticed, the first and the third \staw execute a \cas on the same
variable (\anch) so it is possible to delay the third \staw of the
\supw by executing a \cas in the first \staw. This implies
that the expansion in \staw one should also be considered when the
delay in the third \staw is considered, and the other way around. This
can be done by summing expansion estimates of the \staws that run the
\cas on the same variable and using this expansion value in all these
\staws. Again, it just requires simple modifications in the expansion
formula by keeping assumptions unchanged.

We first run pop-only and push-only experiments where dedicated
threads operate on both ends of the \deq, in a half-half
manner. We provide predictions by plugging the slightly modified
expansion estimate, as explained above, into the \avba approach. Then,
we take one step further and mix the operations, assigning the threads
inequally among push and pop operations.
And, we obtain estimates for them by simply taking the weighted
average (depending on the number of threads running each operation) of
the \supw of pop-only and push-only experiments, with
the corresponding \pw value.

\begin{figure}[h!]
\begin{center}
\includegraphics[width=.8\textwidth]{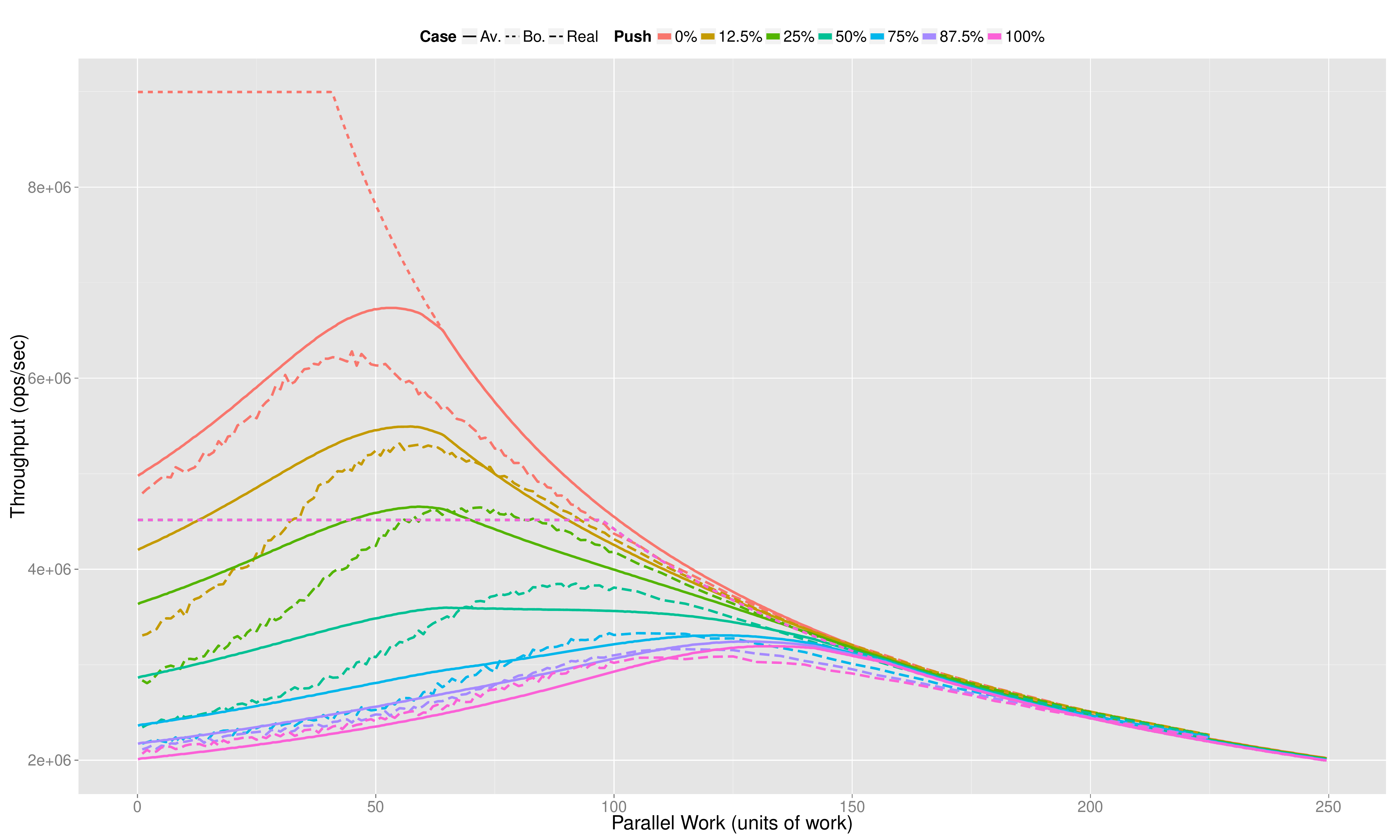}
\end{center}
\caption{Operations on \deq\label{fig:deq}}
\end{figure}

In Figure~\ref{fig:deq}, results are illustrated; they are
satisfactory for the push-only and pop-only cases.
For the mixed-case experiments, the results are mixed: our analysis
follows the trend and becomes less accurate
when the \pw gets lower, as experimental
curves tend toward push-only \supw. This, presumably, happens because
the first \staw of a \pushl (or \pushr) operation is shorter than the
first \staw of a \popl (or \popr) operation. This brings indeed an
advantage to push operations, under contention: they have higher chances
to linearize before pop operations after the \ds comes into the stable
state. It\rr{ also} provides an interesting observation which highlights
the lock-free nature of operations: it is improbable to complete a pop
operation if numerous threads try to push, due to the
difference of work inside the first \staw of their \rl.
}


\newcommand{\fullenq}{
As a first step, we consider the \enqop operation of the MS queue to
validate our approach.  This operation requires two
pointer updates leading to two \staws, each ending with a \cas. The
first \staw, that linearizes the operation, updates the next pointer
of the last element to the newly enqueued element. In the next and
last \staw, the queue's head pointer is updated to point to the recently
enqueued element, which could be done by a helping thread, that brings
the data structure into a stable state. Here, we determine the \cw by 
subtracting the \rc and \cc from the non-contended cost of \enqop operation.

\rr{
\begin{figure}[h!]
\begin{center}
\includegraphics[width=.8\textwidth]{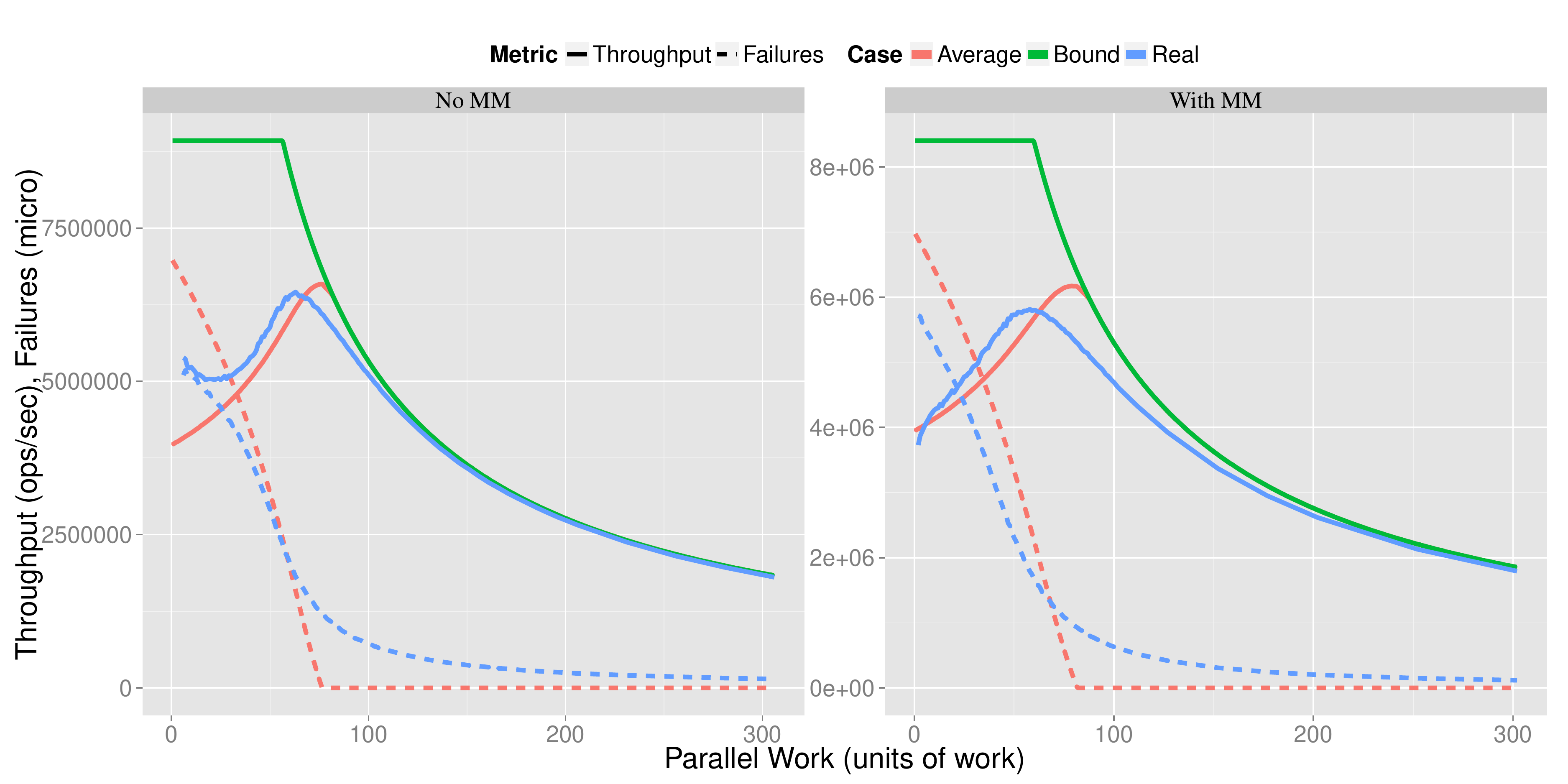}
\end{center}
\caption{Enqueue on MS Queue \label{fig:enqueue}}
\end{figure}

We estimate the expansion in the \supw as described above and
throughput as explained in Section~\ref{sec:avba}. The results for the
\enqop experiments where all threads execute \enqop are presented in
Figure~\ref{fig:enqueue}.
}
}


\newcommand{\fulladsexp}{
Consider an operation such that, the \supw (ignoring the slack time) is composed
of $S$ \staws (denoted by $\stag{1}, \dots, \stag{S}$) where each stage represents a step
towards the completion of the operation.
Let \casn{i} denote the \cas operation at the end of the \stag{i}.
From a system-wide perspective, $\{ \casn{1}, \dots , \casn{S}\}$ is the set of \cass that
have to be successfully and consecutively executed to complete an operation, assuming
all threads are executing the same operation.
This design enforces that \casn{i} can be successful only if the last successful \cas
is a \casn{i-1}. And, \casn{1} can be successful only if the last successful \cas
is a \casn{S}. In other words, another operation can not linearize before the
completion of the linearized but incomplete operation.

Now, let $e_i$ denote the expected expansion of \casn{i}.  If the \ds
is in the stable state (\ie is in \stag{1}, where a new operation can be
linearized), then we have to consider the probability, for all threads
except one, to expand the successful \casn{1} which linearizes the
operation. After the linearization, this operation will be completed
in the remaining stages where again the successful \cass at the end
of the stages are subject to the same expansion possibility by the
threads in the \rl, as they might be still trying to help for the
completion of the previously completed operation.

Similar to the~\cite{our-disc15}, our assumption here is that any
thread that is in the \rl, can launch \casn{i}, with probability $h$,
that might expand the successful \casn{i}. We consider, the starting
point of a failing \casn{i} is a random variable which is distributed
uniformly within the \rl, which is composed of expanded \staws of the
operation. This is because an obsolete thread can launch a \casn{i},
regardless of the \staw in which the \ds is in (equally, regardless of the
last successful \cas). Due to the uniformity assumption, the expansion
for the successful \cass in all stages, would be equal.  Similar to
the~\cite{our-disc15}, we estimate the expansion $e_i$ by considering the
impact of a thread that is added to the \rl. Let the cost function
$\mathit{delay}_i$ provide the amount of delay that the additional thread
introduces, depending on the point where the starting point of its
\casn{i} hits. By using these cost functions, we can formulate the
total expansion increase that each new thread introduces and derive
the differential equation below to calculate the expected total
expansion in a \supw , where $\avexp{\atrl}=\sum^{S}_{i=1}
\aexpi{\atrl}$. Note that, we assume that the expansion starts as soon
as strictly more than 1 thread are in the retry loop, in
expectation.

\begin{lemma}
\label{lem.1}
The expansion of a \cas operation is the solution of the following
system of equations, where $\rlw = \sum^{S}_{i=1} \rlw_i =
\sum^{S}_{i=1}(\rc_i + \cw_i + \cc_i)$:
\[ \left\{
\begin{array}{lcl}
\expansionp{\atrl} &=& \fcas \times \dfrac{S \times \frac{\fcas}{2} + \expansion{\atrl}}{ \rlw + \expansion{\atrl}}\\
\expansion{\trlo} &=& 0
\end{array} \right., \text{ where \trlo is the point where expansion begins.}
\]
\end{lemma}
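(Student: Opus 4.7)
The plan is to derive the differential equation of the statement by computing, in a mean-field fashion, the marginal effect on the expected total expansion of adding one extra thread to the retry loop, then summing the contributions over all $S$ stages.

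First I would fix a stage $\stag{i}$ and describe the delay function $\shifti{i}{\cdot}$ alluded to in the paragraph above the lemma: it records how much the successful $\casn{i}$ is pushed back by the failing $\casn{i}$ of an added thread, as a function of where that failing CAS lands in the currently expanded retry loop. By the uniformity assumption, this landing point is distributed uniformly on an interval of length $\rlw + \avexp{\atrl}$. I would then split the critical window of $\casn{i}$, of total length $\fcas + \aexpi{\atrl}$, into two regions. In the ``base'' region of length $\fcas$, corresponding to the original CAS window of the successful $\casn{i}$, a failing CAS landing at distance $t$ before the success pushes it back by $\fcas - t$, giving an average push-back of $\fcas/2$ by uniformity. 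In the ``cascade'' region of length $\aexpi{\atrl}$, corresponding to the part already absorbed by the previously accumulated expansion of this stage, any landing inserts a full extra CAS slot ahead of the success, contributing a push-back of exactly $\fcas$.

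Multiplying the length of each region by the corresponding average push-back and dividing by the length of the (expanded) retry loop yields, for stage $i$, a contribution of $\fcas \times (\fcas/2 + \aexpi{\atrl})/(\rlw + \avexp{\atrl})$ to the marginal expansion per added thread. Summing over $i \in \inte[1]{S}$ and invoking the symmetry assumption $\aexpi{\atrl} = \avexp{\atrl}/S$, so that $\sum_i \aexpi{\atrl} = \avexp{\atrl}$, I obtain, after factoring $\fcas$ out of the sum,
\[
\expansionp{\atrl} \; = \; \fcas \times \frac{S \times \fcas/2 + \avexp{\atrl}}{\rlw + \avexp{\atrl}},
\]
which is the first equation of the system. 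The initial condition $\expansion{\trlo} = 0$ is then just the definition of \trlo, noted in the paragraph above the lemma as the value of $\atrl$ at which expansion begins to appear.

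The main obstacle is producing a clean justification of the two-region decomposition of the critical window and of the different push-back rules that apply in each region, since the ``cascade'' rule is precisely what carries the effect of all previously added threads and makes the equation non-linear. The calculation itself is then only a short integration against the uniform distribution, collapsed across stages by the symmetry assumption that places equal expansion on each of the $S$ stages; this is exactly the natural generalization of the single-stage formula of~\cite{our-disc15} by a factor of $S$ applied to the $\fcas/2$ base term.
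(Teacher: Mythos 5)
Your proposal matches the paper's proof in both structure and substance: the paper likewise computes the marginal expansion $\avexp{\atrl+h}-\avexp{\atrl}$ contributed by a thread arriving with probability $h$, integrates the stage-$i$ delay function against the uniform landing distribution on the expanded retry loop of length $\rlw+\avexp{\atrl}$, splits the non-zero contribution into the CAS window (yielding $\fcas^2/2$) and the already-accumulated expansion region (yielding $\fcas\times\aexpi{\atrl}$), and sums over the $S$ stages using $\sum_i \aexpi{\atrl}=\avexp{\atrl}$ before letting $h\to 0$. The only cosmetic difference is that the paper writes the decomposition as four explicit integrals (two of which vanish) rather than your two-region description of the critical window.
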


\begin{proof}

We compute $\expansion{\atrl + h}$, where $h\leq1$, by assuming that
there are already \atrl threads in the \rl, and that a new thread
attempts to \cas during the \re, within a probability $h$. For
simplicity, we denote $a^i_j = (\sum_{j=1}^{i-1} \rlw_j + e_j(\atrl))
+ \rc_i + \cw_i$.

\begin{align*}
\expansion{\atrl + h}
&= \expansion{\atrl} + h\times
\sum^{S}_{i=1} \rint{0}{\rlsiz}{\frac {\shifti{i}{t_i}}{\rlsiz}}{t_i} \\
&= \expansion{\atrl}
     + h \times \sum^{S}_{i=1} \Big( \rint{0}{a^i_j - \cc}{\frac{\shifti{i}{t_i}}{\rlsiz}}{t_i}  +
      \rint{a^i_j - \fcas}{a^i_j}{\frac{\shifti{i}{t_i}}{\rlsiz}}{t_i}\\
& \quad\quad\quad\quad\quad\quad\quad\quad\quad\quad  + \rint{a^i_j}{a^i_j + \aexpi{\atrl}}{\frac{\shifti{i}{t_i}}{\rlsiz}}{t_i}
  + \rint{a^i_j + \aexpi{\atrl}}{\rlsiz}{\frac{\shifti{i}{t_i}}{\rlsiz}}{t_i}\Big)\\
  &= \expansion{\atrl} + h \times \sum^{S}_{i=1} \Big(
    \rint{a^i_j-\fcas}{a^i_j}{\frac{t_i}{\rlsiz}}{t_i}
      + \rint{a^i_j}{a^i_j + \aexpi{\atrl}}{\frac{\fcas}{\rlsiz}}{t_i} \Big)\\
\expansion{\atrl + h} &= \expansion{\atrl} + h \times \frac{ (\sum^{S}_{i=1} \frac{\fcas^2}{2}) + \expansion{\atrl}\times\fcas}{\rlsiz}
\end{align*}

This leads to
\[ \quad\frac{\expansion{\trl + h}- \expansion{\atrl}}{ h} = \frac{ S \times \frac{\fcas^2}{2} + \expansion{\atrl}\times\fcas}{\rlsiz}.\]
When making $h$ tend to $0$, we finally obtain
\[ \expansionp{\atrl} = \fcas \times \frac{S \times \frac{\fcas}{2} + \expansion{\atrl}}{ \rlw + \expansion{\atrl}}. \qedhere\]
\end{proof}

In addition, if a set $S_k$ of \cass are operating on the same
variable $var_k$, then $\casn{i} \in S_k$ can be expanded by the
$\casn{j} \in S_k$. In this case, we can obtain $\aexp{k}{\atrl}$ by
using the reasoning above. The calculation simply ends up as follows:
Consider the problem as if no \cas shares a variable and denote
expansion in \stag{i} with $\aexpi{\atrl}^{(\mathit{old})}$. Then, $\aexp{k}{\atrl}
= \sum_{\cas_i \in S_k} \aexpi{\atrl}^{(\mathit{old})}$.
}


\newcommand{\fulladswati}{
We assume here the slack time can only occur after the completion of
an operation (\ie before stage 1), as the other stages are expected to
start immediately due to the thread that completes the previous
stage. Similar to Section~\ref{sec:litt-slack}, we consider that, at
any time, the threads that are running the retry loop have the same
probability to be anywhere in their current retry.  Thus, a thread can
be in any stage just after the successful CAS that completes the
operation. So, we need to consider the thread which is closest to the
end of its current stage when the operation is completed. We denote
the execution time of the expanded retry loop with \rlsiz and the
number of \staws with $S$. For a thread executing \stag{i} when the
operation completes, the time before accessing the \ds is then
uniformly distributed between 0 and $\rlsiz_i$.

Here, we take another assumption and consider all stages can be
completed in the same amount of time (\ie for all (i, j) in $\{1,
\dots ,S\}^2$, $\rlsiz_i = \rlsiz_j = \rlsiz/S$). This assumption
does not diverge much from the reality and provides a reasonable
approximation. With these assumption and using
Lemma~\ref{lem:unif-min}, we conclude that:

\begin{equation}
\label{eq:slack-multiple}
\avwati{\atrl} = \frac{\rlsiz}{S \times (\atrl +1)}.
\end{equation}

}


\newcommand{\sumads}{
Here, we consider \dss that apply immediate helping, where threads
help for the completion of a recently linearized operation until the
\ds comes into a stable state in which a new operation can be
linearized. The crucial observation is that the \ds goes through
multiple \staws in a round robin fashion.  The first \staw is the one
where the operation is linearized. The remaining ones are the \staws
in which other threads, that execute another operation, might help for
the completion of the linearized operation, before attempting to
linearize their own operations.  Thus, the \supw (ignoring the \watiw)
can be seen as the sum of the execution time of these \staws, each
ending with a \cas that updates a pointer. The \cas in the first stage
might be expanded by the threads that are competing for the
linearization of their operation, and consequent \cass might be
expanded by the helper threads, which are still trying to help an
already completed operation.  Also, there might be slack time before
the start of the first \staw as the other \staws will start
immediately due to the thread that has completed the previous \staw.

Although it is hard to stochastically reconstruct the executions
with Markov chains, our \avba approach
provides the flexibility required to estimate the performance by
plugging the expected \supw, given the number of threads inside the
\rl, into the Little's Law.  As the impacting factors are similar, we
estimate the \supw in the same vein as in Section~\ref{sec:avba}; with
a minor adaptation of the expansion formula
and by slightly adapting the slack time estimation based on the
same arguments.
}


\newcommand\bothmm{
\falseparagraph{Fine-grain Memory Management Scheme}
We divide the routine (and further the phases) of the traditional MM
mechanism into quanta (equally-sized chunks).%
\finemm

\falseparagraph{Adaptive Memory Management Scheme}
We build the adaptive MM scheme on top of the fine-grain MM mechanism by
adding a monitoring routine that tracks the number of failed \rls,
employing a sliding windows. Given a granularity parameter and a
number of failed \rls, we are able to estimate the parallel work and
the throughput, hence we can decide a change in the granularity
parameter to reach the peak performance. Note that one can avoid
memory explosion by specifying a threshold like the traditional
implementation in case the application provides a durable low
contention; in the worst case, it performs like the traditional MM.
}


\newcommand\fullbosy{
In Figure~\ref{fig:bo-synt}, we compare, on a synthetic workload, this
constant back-off strategy against widely known strategies, namely
exponential and linear, where the back-off amount increases
exponentially or linearly after each failing retry loop starting from
a \cycles{115} step size.
}


\newcommand\figsynthcst{
\begin{figure}[b!]
\begin{center}
\includegraphics[width=.85\textwidth]{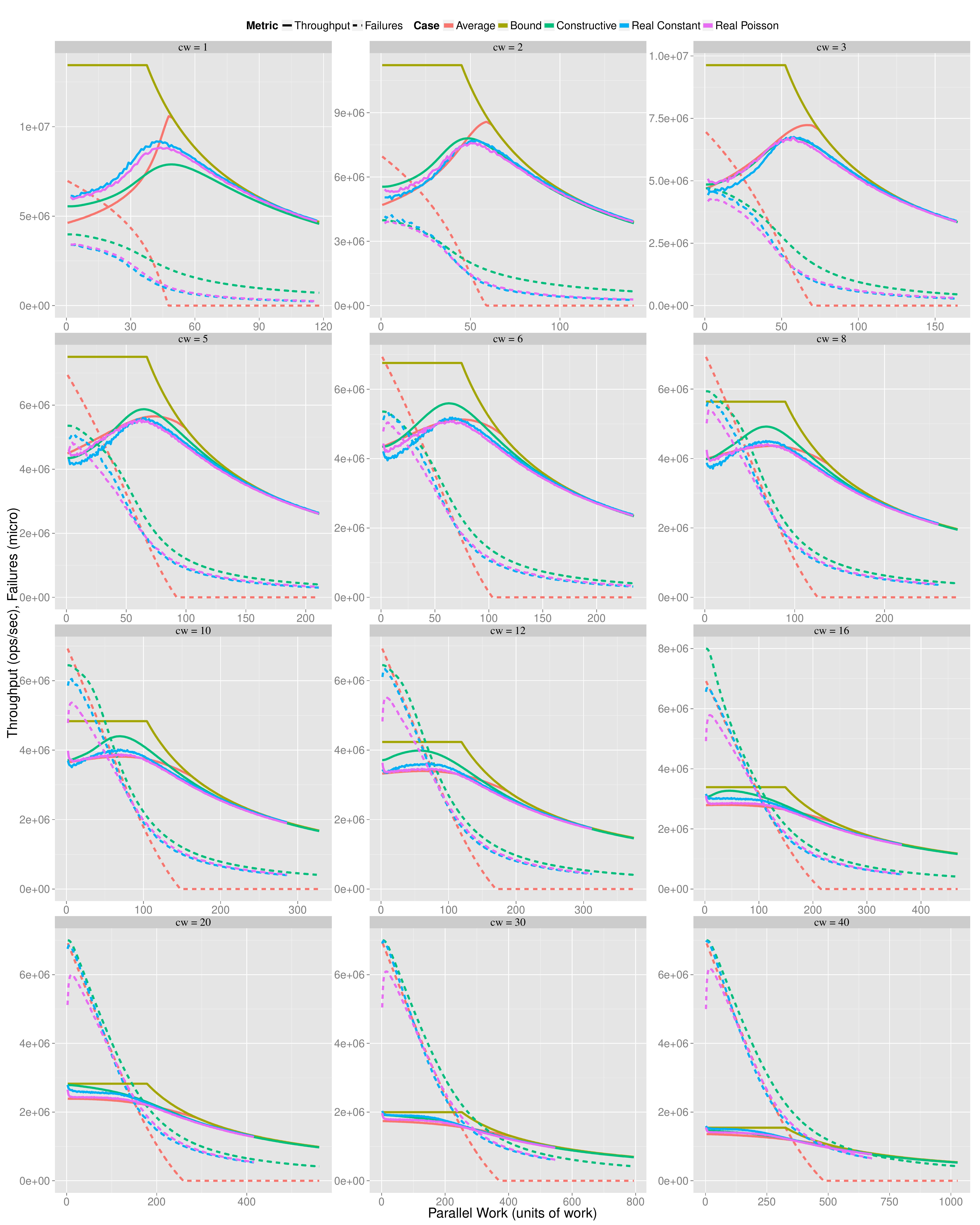}
\end{center}
\caption{Synthetic program with exponentially distributed \pww\label{fig:synt}}
\end{figure}}

\newcommand\figsynthpoi{
\begin{figure}[b!]
\begin{center}
\includegraphics[width=.85\textwidth]{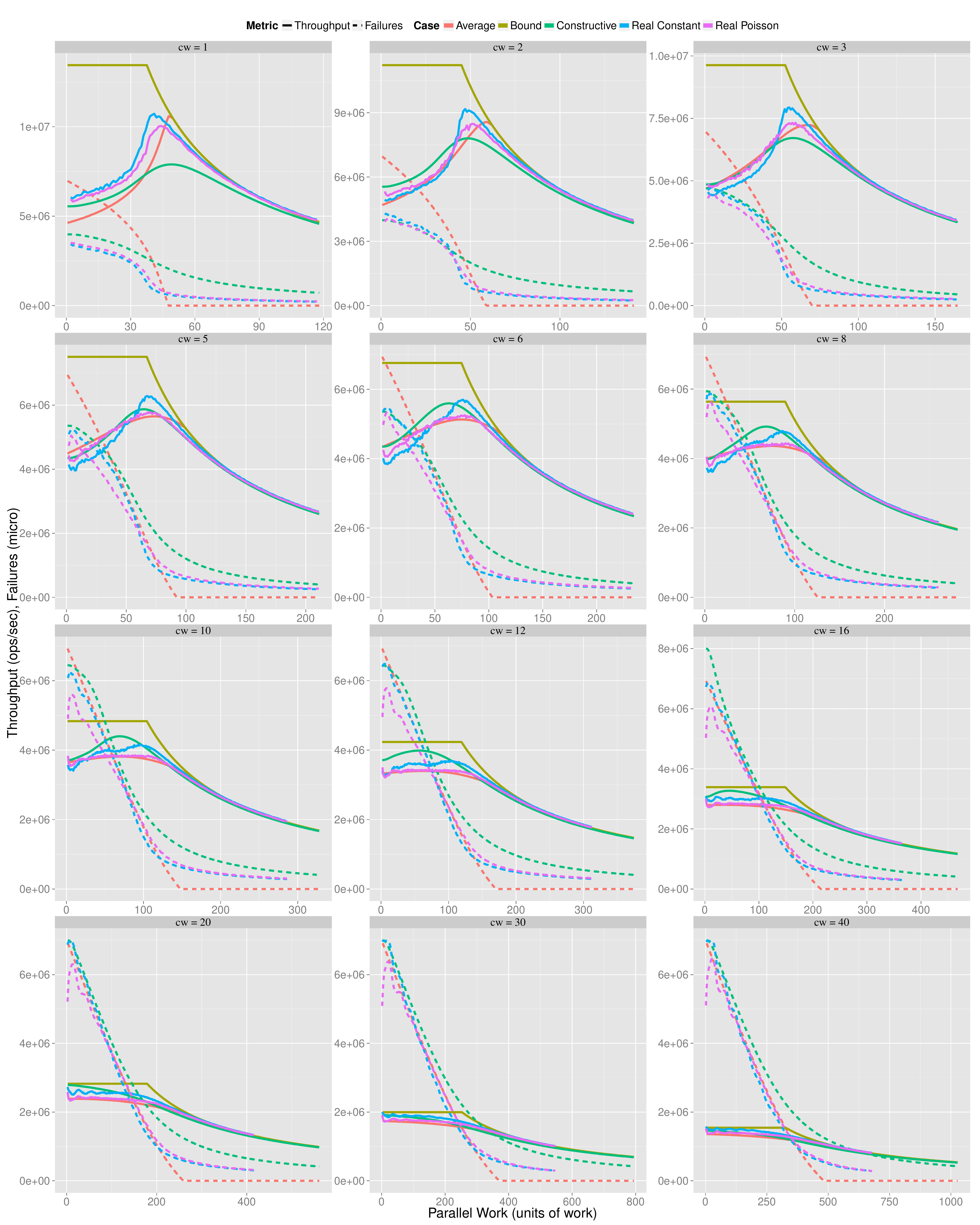}
\end{center}
\caption{Synthetic program with \pww following Poisson\label{fig:synt-poisson}}
\end{figure}}

\newcommand\figsynthconst{
\begin{figure}[b!]
\begin{center}
\includegraphics[width=.85\textwidth]{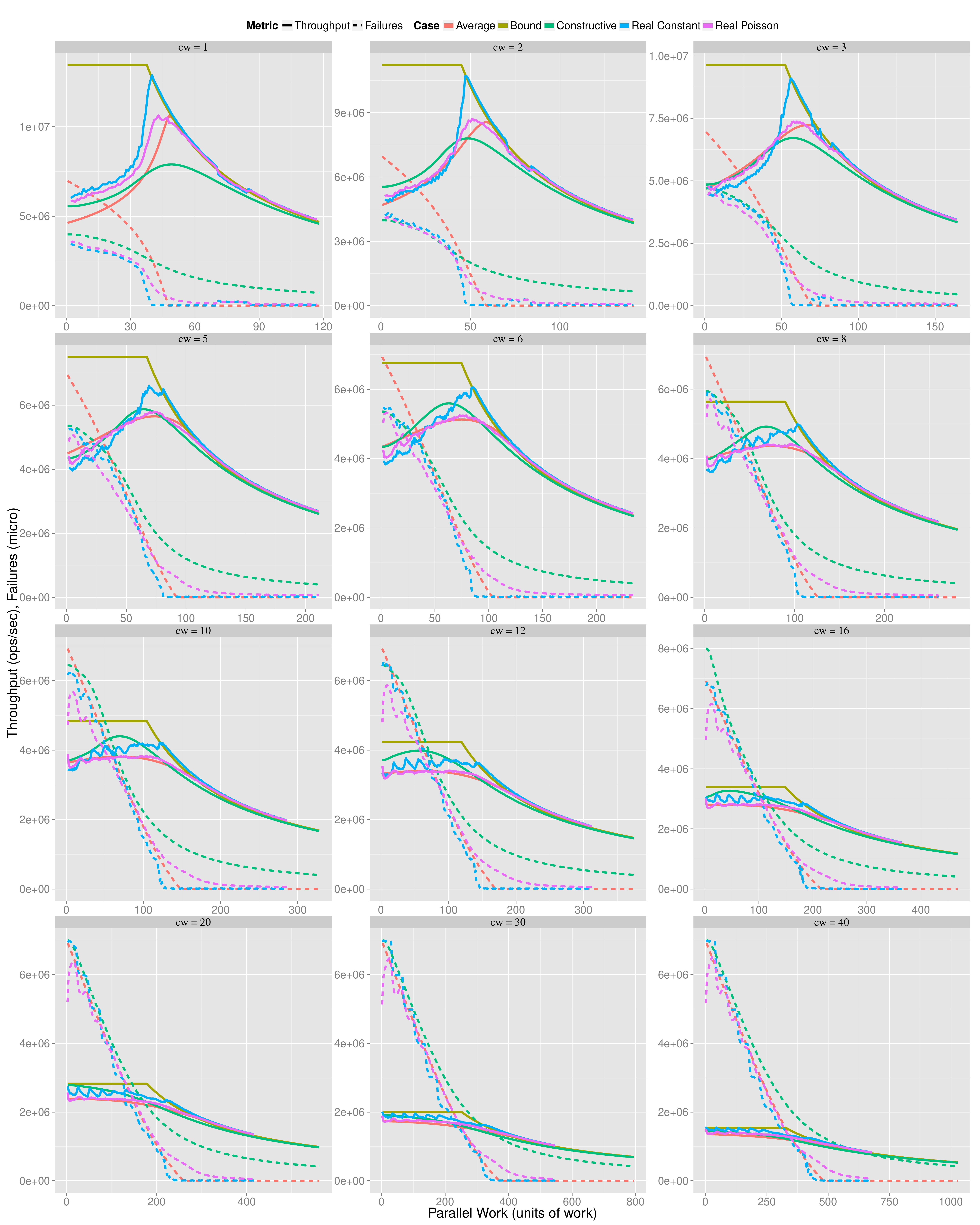}
\end{center}
\caption{Synthetic program with Constant \pww \label{fig:synt-const}}
\end{figure}}

\newcommand\figtreib{
\begin{figure}[h!]
\begin{center}
\includegraphics[width=\textwidth]{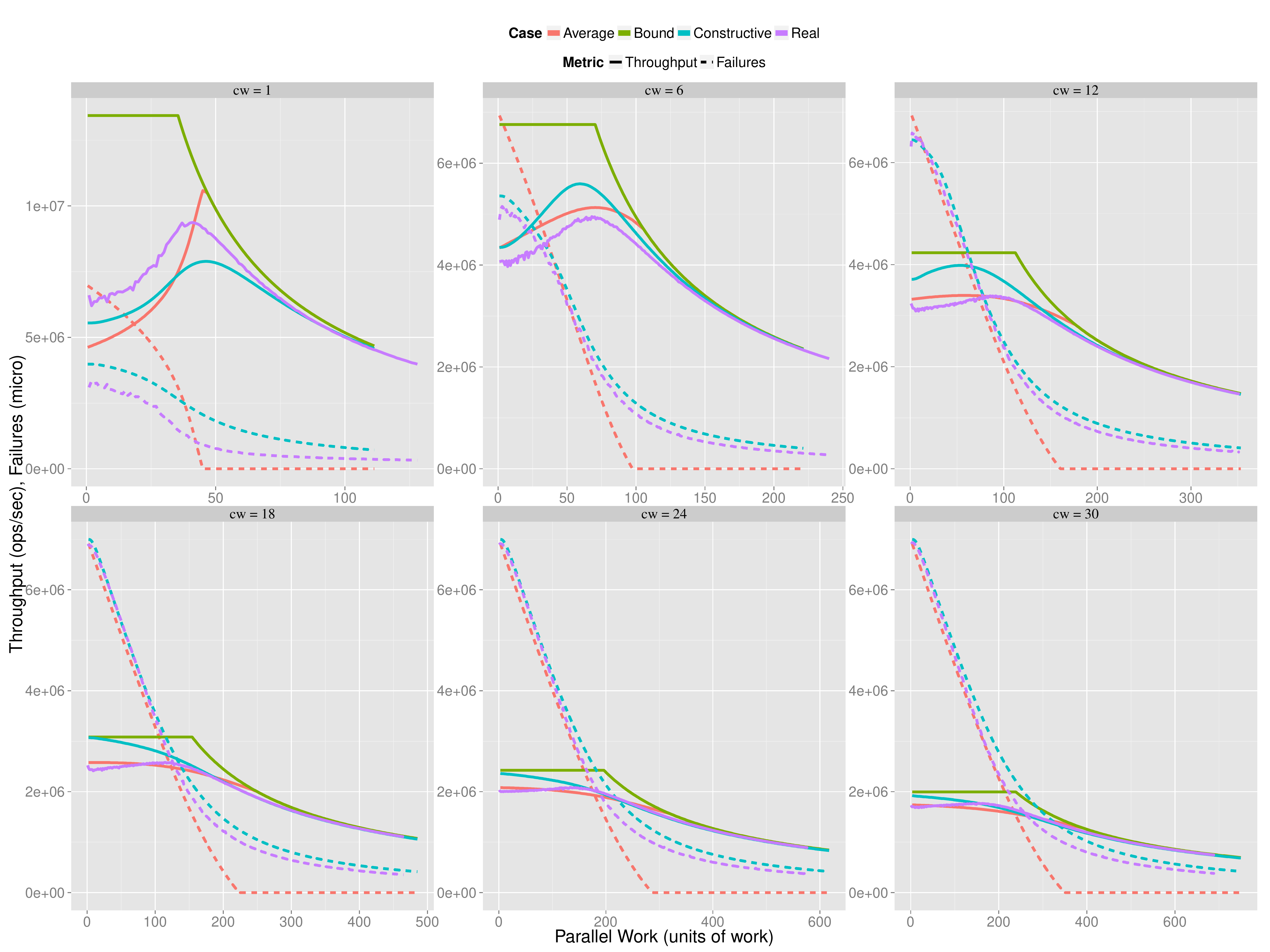}
\end{center}
\caption{Treiber's Stack\label{fig:stack}}
\end{figure}}

\newcommand\figcompmm{
\begin{figure}[b!]
\begin{center}
\pr{\includegraphics[width=.8\textwidth]{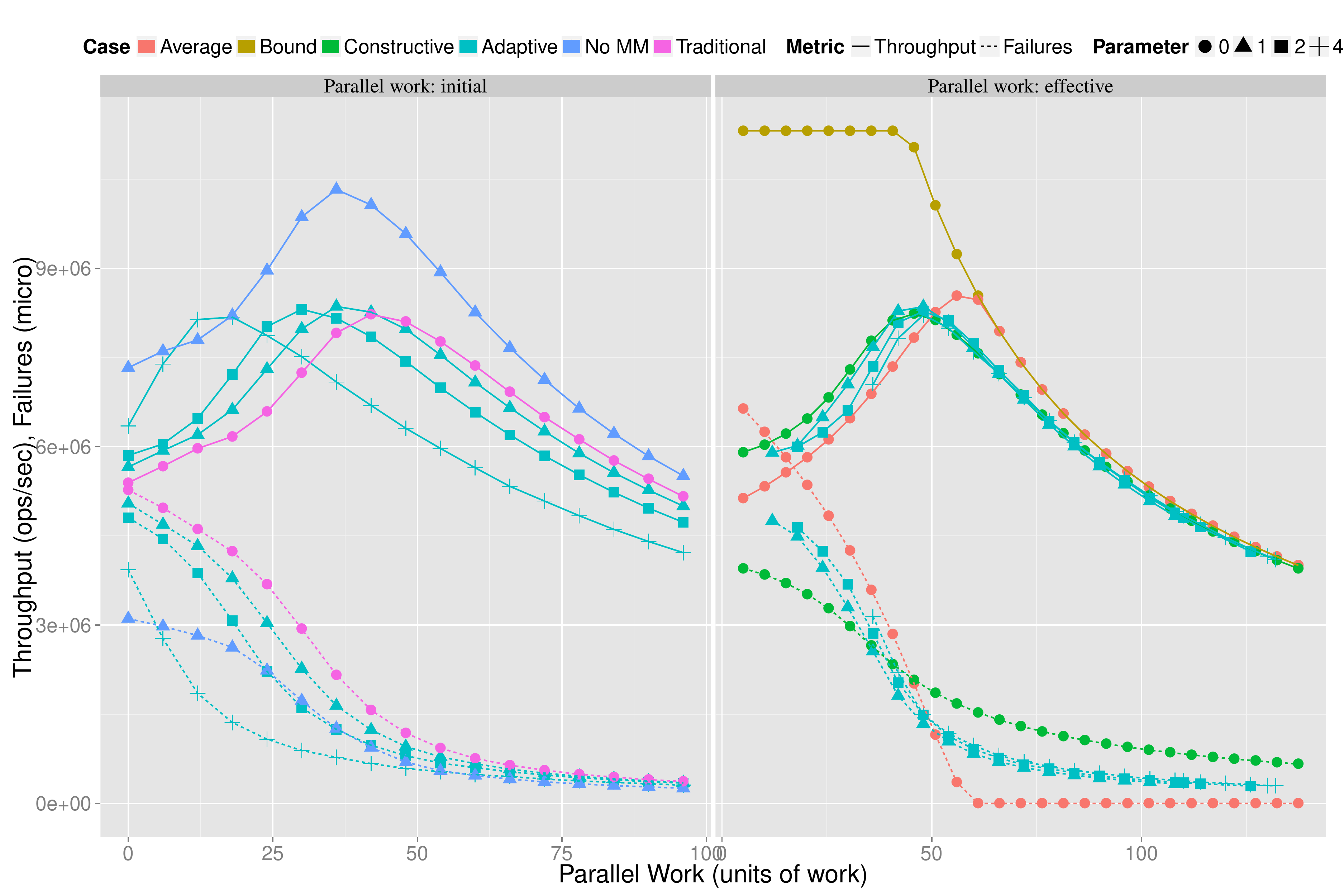}}{\includegraphics[width=.9\textwidth]{dequeue_xp_rr_disc}}
\end{center}
\caption{Performance of memory management mechanisms\label{fig:mm_perf}}
\end{figure}}


\section{Introduction}

During the last two decades, lock-free \dss have received a lot of
attention in the literature, and have been accepted in industrial
applications, \eg in the Intel's Threading Building Blocks
Framework~\cite{itbbf}, the Java concurrency package~\cite{jav-conc}
and the Microsoft .NET Framework~\cite{mic-net-f}.
\rr{Lock-free implementations provide indeed a way out of several
limitations of their lock-based counterparts, in robustness,
availability and programming flexibility. Last but not least, the
advent of multi-core processors has pushed lock-freedom on top of the
toolbox for achieving scalable synchronization.}

Naturally, the development of lock-free \dss was accompanied by
studies on the performance of such \dss, in order to characterize their
scalability.
Having no guarantee on the execution time of an individual operation,
the time complexity analyses of lock-free algorithms have turned
towards amortized analyses.
The so-called amortized analyses are thus interested in the \wc
behavior over a sequence of operations, which can be seen as a \wc
bound on the average time per operation.
In order to cover various contention environments, the time complexity
of the algorithms is often parametrized by different contention
measures, such as point~\cite{point-contention}, interval~\cite{interval-contention}
or step~\cite{step-contention} contention.
Nonetheless these investigations are targeting worst-case asymptotic
behaviors.  There is a lack of analytical results in the literature
capable of describing the execution of lock-free algorithms on top of a
hardware platform, and providing predictions that are close to what is
observed in practice.
Asymptotic bounds are particularly useful to rank different
algorithms, since they rely on a strong theoretical background, but
the presence of potentially high constants might produce misleading
results. Yet, an absolute prediction of the performance can be of
great importance by constituting the first step for further
optimizations. 

The common measure of performance for \dss is throughput, defined
as the number of operations on the \ds per unit
of time.
To this end, this performance measure is usually obtained by
considering an algorithm that strings together a pure sequence of
calls to an operation on the \ds. However, when used in a more
realistic context, the calls to the operations are mixed with
application-specific code (that we call here \pww). For instance, in a
work-stealing environment designed with \deqs, a thread basically runs
one of the following actions: pushing a new-generated task in its
\deq, popping a task from a \deq or executing a task. The
modifications on the \deqs are thus interleaved with \deq-independent
work. There exist some papers that consider in their experiments local
computations between calls to operations during their respective
evaluations, but the amount of local computations follows a given
distribution varying from paper to paper, \eg constant~\cite{lf-queue-michael}, uniform \cite{scalable-stack-uniform},
exponential \cite{Val94}.

In this work, we derive a general approach for unknown distributions of 
the size of the application-specific code, as well as a tighter method 
when it follows an exponential distribution.

As for modeling the \ds itself, we use as a basis the universal construction
described by Herlihy in~\cite{herli-univ-const}, where it is shown
that any abstract data type can get such a lock-free implementation, which
relies on one \rl.
Moreover, we have particularly focused our experiments on \dss
that present a low level of disjoint-access
parallelism~\cite{disjoint-access} (stack, queue, shared counter,
\deq). Coming back to amortized analyses, the time complexity of an
operation is often expressed as a contention-free time complexity
added with a contention overhead. In this paper, we want to model and
analyze the impact of contention.
Loosely speaking, the \dss that exhibit low level of disjoint-access parallelism 
have lightweight operations (\ie low contention-free complexity) 
and they are prone to high contention overheads. In contrast, the \dss 
that present high level of disjoint-access parallelism, or 
that employ 
contention alleviation techniques, provide heavyweight operations 
(\ie high contention-free complexity) and behave differently, compared to the
previous ones, under contention.
Our analyses examine this trade-off and then facilitate conscious decisions in the 
\dss design and use.

We propose two different approaches that analyze the performance of
such \dss.
On the one hand, we derive an \avba approach invoking queuing theory,
which provides the throughput of a lock-free algorithm without any
knowledge about the distribution of the \pww. This approach
is flexible but allows only a coarse-grained analysis, and hence a
partial knowledge of the contention that stresses the \ds.
On the other hand, we exhibit a detailed picture of the execution of
the algorithm when the \pww is instantiated with an exponential
distribution, through a second complementary approach. We prove that
the multi-threaded execution follows a Markovian process and a Markov
chain analysis allows us to pursue and reconstruct the execution, and
to compute a more accurate throughput.

We finally show several ways to use our analyses and we evaluate the
validity of our ideas by experimental results. Those two analysis
approaches give a good understanding of the phenomena that drive the
performance of a lock-free \ds, at a high-level for the \avba
approach, and at a detailed level for the constructive
method. Moreover, our results provide a common framework to compare
different implementations of a \ds, in a fair manner. We also
emphasize that there exist several concrete paths to apply our
analyses.
To this end, based on the knowledge about the application at hand, we
implement two back-off strategies. We show the applicability of these strategies by tuning a Delaunay triangulation
application~\cite{caspar} and a streaming pipeline component which is
fed with trade exchange workloads~\cite{taq-se}.
These experiments reveal the validity of our analyses in the application 
domain, under non-synthetic workloads and diverse access patterns.
We
confirm the benefits of our theoretical results by designing
a new adaptive memory management mechanism for
lock-free data structures in dynamic environments which surpasses the
traditional scheme and which is such that the loss in performance,
when compared to a static data structure without memory management, is
largely leveraged. This memory management mechanism is based on the analyses presented in this paper. 

\rr{
The rest of the paper is organized as follows: we start by presenting
related work in Section~\ref{sec:related}, then we define the
algorithm and the platform that we consider, together with concepts
that are common to our both approaches in Section~\ref{sec:preli}. The
\avba approach is described in Section~\ref{sec:avba}, while the
constructive analysis is exposed in Section~\ref{sec:cons}, and both
methods are evaluated in the experiment part that is presented in
Section~\ref{sec:xp}.
}

\vspp{-.4}
\section{Related Work}
\label{sec:related}

\rr{Approaches that are based on Markov chains and queueing theory, are
commonly employed to analyze the performance of parallel programs
in concurrent environments.
Yu \etal~\cite{yu-markov} have provided an analytical model to
estimate the mean transaction completion time for the transactional
memory systems. They make use of a continuous-time Markov chain
queuing model to analyze the execution of transactions, in which
they formulate the state transition rates by considering the arrival
rate, the service time for the transactions together with other parameters
such as conflict rate that statistically quantifies the spatial
(intersecting data set) and temporal (overlapped time) aspects of
conflicts.
\rr{In~\cite{bahra}, Al{-}Bahra has mentioned
Little's Law as an appropriate tool to understand the effects of
contention on serialized resources for synchronization paradigms.}

Closer to our work, }Alistarh \etal~\cite{ali-same} have studied the
same class of lock-free \dss that we consider in this
paper.
They show initially that the lock-free algorithms
are statistically wait-free and going further they exhibit upper bounds
on the performance.
Their analysis is done in terms of scheduler steps, in a system where
only one thread can be scheduled (and can then run) at each step. If
compared with execution time, this is particularly appropriate to a
system
where the instructions of the threads cannot be done in parallel (\eg
multi-threaded program on a multi-core processor with only
writes on the same cache line of the shared memory). In our paper, the
execution is evaluated in terms of processor cycles, strongly related
to the execution time. In addition, the ``\pww'' and the ``\cww'' can
be done in parallel. Also,
\rr{they bound the asymptotic expected system
latency (with a big O, when the number of threads tends to
  infinity), while }%
in our paper we estimate the throughput (close to
the inverse of system latency) for any number of threads.

\textbf{\textit{Comparing to our previous work:}} In~\cite{our-disc15}, we illustrate the 
performance impacting 
factors and the model we use to cover a subset of lock-free structures 
that we consider in this paper. In the former paper, the analysis is built upon
properties that arise only when the sizes of the \cww and the \pww are
\textit{constant}. There, we show that the execution is not memoryless due to the
natural synchrony provided by the \rls; at the end of the line, we
prove that the execution is cyclic and use this property to bound the
rate of failed \res.

Here, we provide two new approaches which serve different purposes. In the first 
approach, we relax the assumptions regarding the \cww and \pww parameters, that we 
respectively use to model the \ds operations and the application specific code 
from which the \ds operations are called. The first approach relies on the expected values of 
the size of the \cww and the \pww. This allows us to cover, compared to our previous analysis, more
advanced 
lock-free \ds operations, see Section~\ref{sec:advanced-ds}. Also, we can analyse the \dss running on 
a larger variety of application specific environments, thanks to the relaxed 
assumption on the size of the \pww. 
The second approach provides a tight analysis when the \pww follows an exponential 
distribution. We can observe a significant decrease in the performance when 
the \pww is initiated with exponential distribution in comparison to the cases 
where the \pww is constant as in our previous work, see \pr{Appendix~\ref{app:xp-basic}}{Section~\ref{sec:synt_tests}}. The 
tight analyses, in our previous work and the second approach presented in this paper, reveal 
for the first time an analytical understanding of this phenomenon.

This paper is complementary to the previous work, not only because of the 
difference in the analysis tools, the extensive set of \dss and the application specific 
environments that it considers
but also because they together exhibit the impact of the size distributions 
of the \pww on the performance of lock-free \dss.

\vspp{-.15}
\section{Preliminaries}
\label{sec:preli}

We describe in this section the structure of the algorithm that is covered by
our model. We explain how to analyze the execution of an instance of
such an algorithm when executed by several threads, by slicing this
execution into a sequence of adjacent \supws, where a \supw is an interval of
time during which exactly one operation returns. Each of the \supws
is further split into two by the first access to the \ds in the
considered \rl.
This execution pattern reflects fundamental phases of both analyses,
whose first steps and general direction are outlined at the end of the
section.

\vspp{-.2}
\subsection{System Settings}

All threads call Procedure~\ref{alg:gen-name} (see
Figure~\ref{alg:gen-nb}) when they are spawned. So each thread follows
a simple though expressive pattern: a sequence of calls to an
operation on the \ds, interleaved with some \pww during which the
thread does not try to modify the \ds.
For instance, it can represent a work-stealing algorithm, as
described in the introduction.

The algorithm is decomposed in two main sections: the {\it \ps},
represented on line~\ref{alg:li-ps}, and the {\it \rl} (which
represents one operation on the shared \ds) from
line~\ref{alg:li-bcs} to line~\ref{alg:li-ecs}. A {\it \re} starts at
line~\ref{alg:li-bbcs} and ends at line~\ref{alg:li-ecs}. The outer
loop that goes from line~\ref{alg:li-bwl} to line~\ref{alg:li-ecs} is
designated as the {\it \wl}\rr{.

}\pp{. }
In each \re, a thread tries to modify the \ds and does not exit the
\rl until it has successfully modified the \ds.
\rr{It firstly reads the
access point \DataSty{AP} of the \ds, then, according to the value
that has been read, and possibly to other previous computations that
occurred in the past, the thread prepares, during the \cww,
the new desired value as an access point of
the \ds. Finally, it atomically tries to perform the change through a
call to the \cas primitive. If it succeeds, \ie if the access point
has not been changed by another thread between the first \rf and the
\cas, then it goes to the next \ps, otherwise it repeats the
process. }%
The \rl is composed of at least one \re (and the first
iteration of the \rl is strictly speaking not a \re, but a try).

We denote by \cc the execution time of a \cas{} when the executing
thread does not own the cache line in exclusive mode, in a setting
where all threads share a last level cache. Typically, there
exists a thread that touches the data between two requests of the same
thread, therefore this cost is paid at every occurrence of a \cas.
As for the \rf{}s, \rc holds for the execution time of a cache
miss. When a thread executes a failed \cas, it immediately reads the
same cache line (at the beginning of the next \re), so the cache line
is not missing, and the execution time of the \rf is considered as
null. However, when the thread comes back from the \ps, a cache miss
is paid.
To conclude with the parameters related to the platform, we dispose of
\ct cores, where the \cas (resp. the \rf) latency is identical for all
cores, \ie \cc (resp. \rc) is constant.

The algorithm is parametrized by two execution times. In the general
case, the execution time of an occurrence of the \ps
(application-specific section) is a random variable that follows an
unknown probability distribution. In the same way, the execution time
of the \cww (specific to a \ds) can vary while following an unknown
probability distribution. The only provided information is the mean
value of those two execution times: \cw for the \cww, and \pw for the
\pww. These values will be given in units of work, where $\uow{1} =
\cycles{50}$.\vspp{-.3}

\subsection{Execution Description}
\label{sec:fra-exe}

It has been underlined in~\cite{our-disc15} that there are two
main conflicts that degrade the performance of the \dss which do not
offer a great degree of disjoint-access parallelism: logical and
hardware conflicts.

{\it Logical conflicts} occur when there are more than one thread
in the retry loop at a given time (happens typically when the
number of threads is high or when the parallel section is small).
At any time, considering only the threads
that are in the \rl, there is indeed at most one thread whose \re will
be successful (\ie whose ending \cas will succeed), which implies the
execution of more \res for the failing threads.
In addition, after a thread executes successfully its final
\cas, the other threads of the \rl have first to finish their current
\re before starting a potentially successful \re, since they are not
informed yet that their current \re is doomed to failure. This creates
some ``holes'' in the execution where all threads are executing useless
work.

The threads will also experience {\it hardware conflicts}: if several
threads are requesting for the same data, so that they can operate a
\cas on it, a single thread will be satisfied. All the other threads
will have to wait until the current \cas is finished, and give a new
try when this \cas is done. While waiting for the ownership of the
cache line, the requesting threads cannot perform any useful
work. This waiting time is referred to as {\it expansion}.

\def\herec{.5}
\def\wirec{2}
\def\grey{black!20}
\def\greywh{black!4}

\def\maroon{blue!20!black!40!red!}
\def\green{black!20!green}

\newcommand{\pha}[5]{%
\node[it,text width=#4em,right= 0 of #2,fill=#5] (#1) {#3};
}

\newcommand{\supfig}{
\begin{center}
\begin{tikzpicture}[%
it/.style={%
    rectangle,
    text width=11em,
    text centered,
    minimum height=3em,
    draw=black!50,
    scale=.85
  }
]
\coordinate (O) at (0,0);
\pha{pcas}{O}{successful\\\cas}{\pr{4.5}{5}}{\green}
\pha{sla}{pcas}{useless\\work}{\pr{4}{8}}{\grey}
\pha{acc}{sla}{\acc}{\pr{3.5}{4}}{orange}
\pha{cri}{acc}{\cw}{\pr{2}{4}}{\maroon}
\pha{exp}{cri}{expansion}{\pr{4.5}{6}}{\grey}
\pha{fcas}{exp}{successful\\\cas}{\pr{4.5}{5}}{\green}
\draw [decorate,decoration={brace,mirror,amplitude=10pt}]
(sla.south west) -- (sla.south east)
node [black,midway,yshift=-20pt] {\watiw};
\draw [decorate,decoration={brace,mirror,amplitude=10pt}]
(acc.south west) -- (fcas.south east)
node [black,midway,yshift=-20pt] {\compw};
\draw [decorate,decoration={brace,amplitude=10pt}]
(sla.north west) -- (fcas.north east)
node [black,midway,yshift=20pt] (supw) {\supw};

\node (anot) at (.8,1) {{\scriptsize can be null}};

\draw[very thin] ($(anot.south east)!.8!(anot.east)$) -- ++(.4,0) -- ($(sla.north)+(0,-.1)$);
\draw[very thin] ($(anot.north east)!.8!(anot.east)$) -- ++(.4,0) -- ($(exp.north)+(0,-.1)$);

\end{tikzpicture}
\end{center}
}

\rr{\begin{figure}[t!]
\abstalgo
\caption{Thread procedure}\label{alg:gen-nb}
\end{figure}}

\rr{\begin{figure}[t!]
\supfig
\captionof{figure}{\Supw\label{fig:seq}}
\end{figure}}

\pp{\begin{figure}[t!]
\centering
\begin{minipage}{.4\textwidth}
\abstalgo
\captionof{figure}{Thread procedure\label{alg:gen-nb}}
\end{minipage}\hfill%
\begin{minipage}{.6\textwidth}
\supfig
\captionof{figure}{\Supw\label{fig:seq}}
\end{minipage}
\end{figure}}

We now refine the description of the execution of the algorithm. The
timeline is initially decomposed into a sequence of \supws that will
define the throughput. A \supw is an interval of time of the
execution that
%
(i) starts after a successful \cas,
(ii) contains a single successful \cas,
(iii) finishes after this successful \cas.
\pr{To}{As explained in the previous subsection, to} be successful in its \re,
a thread has first to access the \ds, then modify it locally, and
finally execute a \cas, while no other thread performs changes on the
\ds. That is why each \supw is further cut into two main phases (see
Figure~\ref{fig:seq}). During the first phase, whose duration is
called the {\it \watiw}, no thread is accessing the \ds. The second
phase, characterized by the {\it \compw}, starts with the first access
to the \ds (by any thread). Note that this \acc could be either a \rf
(if the concerned thread just exited the \ps) or a failed \cas (if the thread
was already in the \rl).
The next successful \cas will come at least
after \cw (one thread has to traverse the \cww anyway), that is why we
split the latter phase into: \cw, then expansion, and finally a
successful \cas.

\subsection{Our Approaches}
\label{sec:fra-app}

In this work, we propose two different approaches to compute the
throughput of a lock-free algorithm, which we name as \avba and
constructive. The \avba approach relies on queuing theory and is
focused on the average behavior of the algorithm: the throughput is
obtained through the computation of the expectation of the \supw at a
random time.
As for the constructive approach, it describes precisely the instants
of accesses and modifications to the \ds in each \supw: in this way,
we are able to deconstruct and reconstruct the execution, according to
observed events. The constructive approach leads to a more accurate
prediction at the expense of requiring more information about the
algorithm: the distribution functions of the critical and \pwws have
indeed to be instantiated.

In both cases, we partition the domain space into different levels of
contention (or {\it modes}); these partitions are independent across
approaches, even if we expect similarities, but in each case, cover
the whole domain space (all values of \cww, \pww and number of
threads).
\medskip

\subsubsection{\Avba Analysis}
\label{sec:fra-app-asy}

We distinguish two main modes in which the algorithm can run:
contended and non-contended. In the non-contended mode, \ie when the
\pww is large or the number of threads is low,
concurrent operations are not likely to collide. So every \rl will
count a single \re, and atomic primitives will not delay each
other. In the contended mode, any operation is likely to experience
unsuccessful \res before succeeding (logical conflicts), and a \re
will last longer than in the non-contended mode because of the
collision of atomic primitives (hardware conflicts).

Once all the parameters are given, 
the analysis is centered around the calculation of a single variable
\atrl, which represents the expectation of the number of threads
inside the \rl at a random instant. Based on this variable, we are
able to express the expected expansion \avexp{\atrl} at a random
time. As a next step, we show how this expansion can be used to
estimate the expected \watiw \avwati{\atrl} and the expected \compw
\rwh{\atrl}, and at the end, the expected time of a \supw
\avsupe{\atrl}.

\subsubsection{Constructive Method}
\label{sec:fra-app-con}

The previous \avba reasoning is founded on expected values at a random time,
while in the constructive approach, we study each \supw individually,
based on the number of threads at the beginning of the considered
\supw. So we are able to exhibit more clearly the instants of
occurrences of the different accesses and modifications to the \ds,
and thus to predict the throughput more accurately.

We rely on the same set of values used in the \avba approach, but
these values are now associated with a given
\supw.
Thus the number of threads inside the \rl \trl, as well as the \watiw
and the \compw are evaluated at the beginning of each \supw.
We denote these times in the same way as in the first approach, but
remove the bar on top since these values are not expectations any
more.

The different contention modes do not characterize here the
steady-state of the \ds as in the previous approach but are
associated with the current \supw. Accordingly, the contention can
oscillate through different modes in the course of the execution.
First, a \supw is not
contended when $\trl=0$, \ie when there is no thread in the \rl after
a successful \cas.
In this case, the first thread that exits the \ps
will be successful, and the \acc of the sequence will be a \rf.
Second, the contention of a \supw is high when at any time during
the \supw, there exists a thread that is executing a \cas. In other
words, at the end of each \cas, there is at least one thread that is
waiting for the cache line to operate a \cas on it. This implies that
the first access of the \supw is a \cas and occurs immediately after
the preceding successful \cas: the \watiw is null.
Third, the mid-contention mode takes place when $\trl>0$, while at the
same time, there are not enough requesting threads to fill the whole
\supw with \cass (which implies a non-null \watiw). Since these
requesting threads have synchronized in the previous \supw, \cass do
not collide in the current \supw, and because of that, the expansion
is null.

\section{Average-based Approach}
\label{sec:avba}

We propose in this section our coarse-grained analysis to predict the
performance of lock-free \dss.
Our approach utilizes
fundamental queuing theory techniques, describing the average
behavior of the algorithm.
In turn, we need only a minimal knowledge about the algorithm: the mean
execution time values \cw and \pw.
As explained in Section~\ref{sec:fra-app-asy}, the system runs in one of
the two possible modes: either contended or uncontended.

\subsection{Contended System}

We first consider a system that is contended.
When the system is contended, we use Little's law to obtain, at a
random time, the expectation of the \supw, which is the interval of
time between the last and the next successful \cass
(see Figure~\ref{fig:seq}).

The stable system that we observe is the \ps: threads are entering it
(after exiting a successful \rl) at an average rate, stay inside,
then leave (while entering a new \rl).
The average number of threads inside the \ps is $\atps = \ct - \atrl$,
each thread stays for an average duration of \pw, and in average, one
thread is exiting the \rl every \supw \avsupe{\atrl}, by definition of
the \supw.
According to Little's law~\cite{littles-law}, we have:
\pr{
\begin{equation}
\label{eq:little-gen}
\atps = \pw \times 1 / \avsupe{\atrl}, \text{ \ie} \quad  \avsupe{\atrl} = \pw / (\ct - \atrl)
\end{equation}
}
{
\[ \atps = \pw \times \frac{1}{\avsupe{\atrl}}, \text{ \ie} \]
\begin{equation}
\label{eq:little-gen}
\frac{1}{\pw} \times \avsupe{\atrl} = \frac{1}{\ct - \atrl}
\end{equation}
}

\pr{We decompose a \supw into two parts: \watiw and \compw (as explained
in Section~\ref{sec:fra-exe}). 
We express the expectation of the \supw time as}%
{As explained in Section~\ref{sec:fra-exe}, we further decompose a \supw
into two parts, separated by the first access to the \ds after a
successful \cas. We can then write the average \supw as the sum of:
(i) the expected time before some thread starts its
  \acc (the \watiw), and
(ii) the expected \compw.
We compute these two expectations independently and gather them into
the \supw thanks to:}
\begin{equation}
\label{eq:little-sp}
\avsupe{\atrl} = \avwati{\atrl} + \rwh{\atrl}.
\end{equation}

When the \ds is contended, a thread is likely to be
successful after some failed \res. Therefore a thread that is
successful was  already in the \rl when the previous
successful \cas occurred.
\rr{This implies that the \acc to the \ds will
be due to a failed \cas, instead of a \rf.}%
The time before a thread starts its \acc is then the time before a
thread finishes its current \cww since there is a thread
currently executing a \cas.

\smallskip

\subsubsection{Expected \COmpw}
\label{sec:little-expa}

Since the \ds is contended, numerous threads are inside the \rl, and,
due to hardware conflicts, a \re can experience expansion:
the more threads inside the \rl, the longer time between a \cas
request and the actual execution of this \cas. The expectation of the
\compw can be written as:
\begin{equation}
\label{eq:little-ret}
\rwh{\atrl} = \scas + \cw + \avexp{\atrl} + \scas ,
\end{equation}
where \avexp{\atrl} is the expectation of expansion when there are \atrl
threads inside the \rl, in expectation.
This expansion can be computed in the same way as
in~\cite{our-disc15}, through the following differential equation:\\
\pr{\begin{minipage}{.4\textwidth}
\begin{equation*}
\left\{
\begin{array}{lcl}
\difavexp{\atrl} &=& \fcas \times \dfrac{\frac{\fcas}{2} + \avexp{\atrl}}{ \scas +\calrl + \scas + \avexp{\atrl}},\\
\avexp{1} &=& 0
\end{array} \right.
\end{equation*}\end{minipage}\hfill\begin{minipage}{.5\textwidth}
by assuming that the expansion starts as soon as strictly more than 1
thread are in the \rl, in expectation.\end{minipage}}{
\begin{equation*}
\left\{
\begin{array}{lcl}
\difavexp{\atrl} &=& \fcas \times \dfrac{\frac{\fcas}{2} + 
\avexp{\atrl}}{ \scas +\calrl + \scas + \avexp{\atrl}}\\
\avexp{1} &=& 0
\end{array} \right.,
\end{equation*}
by assuming that the expansion starts as soon as strictly more than 1
thread are in the \rl, in expectation.
}

\vspace*{.3cm}
\subsubsection{Expected \WAtiw}
\label{sec:litt-slack}

Concerning the \watiw, we consider that, at any time, the threads that
are running the \rl have the same probability to be anywhere in their
current \re. However, when a thread is currently executing a \cas, the
other threads cannot execute as well a \cas. The other threads are
thus in their \cww or expansion. For every thread, the time
before accessing the \ds is then uniformly distributed
between $0$ and $\cw+\avexp{\atrl}$.
\pr{Using a well-known formula on the expectation of the minimum of
  uniformly distributed random variables, we show
  in Appendix~\ref{app:lemsl} that
}
{

According to Lemma~\ref{lem:unif-min}, we conclude that
}
\begin{equation}
\label{eq:slack-cont}
\pp{\hspace{5cm}}\avwati{\atrl} = \left( \cw + \avexp{\atrl} \right) / (\atrl +1).
\end{equation}

\rr{\lemsl}

\vspace*{.1cm}
\subsubsection{Expected \SUpw}

We just have to combine Equations~\ref{eq:little-sp},
\ref{eq:little-ret}, and~\ref{eq:slack-cont} to obtain the general
expression of the expected \supw under contention: \pr{$\avsupe{\atrl} = \left( 1 + 1/(\atrl +1) \right) \left( \cw + \avexp{\atrl} \right) + 2\scas$,}%
{\[\avsupe{\atrl} = \left( 1 + \frac{1}{\atrl +1} \right) \left( \cw + \avexp{\atrl} \right) + 2\scas, \]}
which leads, according to Equation~\ref{eq:little-gen}, to
\begin{equation}
\label{eq:little-co}
\frac{1}{\pw} \times \left(
\frac{\atrl +2}{\atrl +1}  \left( \cw + \avexp{\atrl} \right)
+ 2\scas
\right) = \frac{1}{\ct - \atrl}.
\end{equation}

\subsection{Non-contended System}

When the system is not contended, logical conflicts are not likely to
happen, hence each thread succeeds in its \rl at its first {\it
  \re}. \Afort, no hardware conflict occurs. Each thread still
performs one success every \wl, and the \supw is given by
\pr{%
$\avsupe{\atrl} = (\pw + \mem+\cw+\scas)/\ct$.
Moreover, a thread spends in average \pw units of time
in the \rl within each \wl. As this holds for
every thread, we deduce
$\ct - \atrl = \atps = \pw/(\pw + \mem+\cw+\scas) \times \ct$.
Combining the two previous equations, we obtain
\begin{equation}
\label{eq:little-nc}
\frac{\avsupe{\atrl}}{\pw}  = \frac{1}{\ct - \atrl},
\text{ where } \avsupe{\atrl} = \frac{\mem+\cw+\scas}{\atrl}.
\end{equation}
}%
{
\begin{equation}
\label{eq:little-avsp}
\avsupe{\atrl} = \frac{\pw + \mem+\cw+\scas}{\ct}.
\end{equation}

Moreover, a thread spends in average $\mem+\cw+\scas$ units of time
in the \rl within each \wl. As this holds for
every thread, we can obtain the following expression for the total
average number of threads inside the \rl:
\begin{equation}
\label{eq:little-trl}
\atrl  = \frac{\mem+\cw+\scas}{\pw + \mem+\cw+\scas}  \times \ct = \frac{\mem+\cw+\scas}{\avsupe{\atrl}}
\end{equation}

Equation~\ref{eq:little-avsp} also gives $ \mem+\cw+\scas = \ct \times
\avsupe{\atrl} - \pw$, hence, thanks to Equation~\ref{eq:little-trl},
\begin{equation}
\label{eq:little-nc}
\atrl = \frac{\ct \times \avsupe{\atrl}-\pw}{\avsupe{\atrl}}, \text{ \ie} \quad
\frac{\avsupe{\atrl}}{\pw}  = \frac{1}{\ct - \atrl},
\end{equation}
where $\avsupe{\atrl} = \frac{\mem+\cw+\scas}{\atrl}$.}

\subsection{Unified Solving}

\pp{

\proofswitch

We show in the following theorem how to compute the throughput estimate; the proof, 
presented in~\ref{app:fixed-point},
manipulates equations in order to be
able to use the fixed-point Knaster-Tarski theorem.

}

\rr{
\proofswitch

We show in the following theorem how to compute the throughput estimate; the proof
manipulates equations in order to be
able to use the fixed-point Knaster-Tarski theorem.
} 

\begin{theorem}
\label{th:fixed-point}
The throughput can be obtained iteratively through a fixed-point
search, as $\thru = \left( \avsupe{\lim_{n \rightarrow \pinf} u_n} \right) ^{-1}$, where
\[ \left\{\begin{array}{ll}
u_0 = \frac{\mem + \cw + \scas}{\pw + \mem + \cw + \scas} \times \ct &\\
u_{n+1} = \frac{u_n \avsupe{u_n}}{\pw + u_n \avsupe{u_n}} \times \ct & \quad \text{for all } n \geq 0.
\end{array}\right.
\]
\end{theorem}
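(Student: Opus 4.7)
The plan is to rewrite Equation~\ref{eq:little-res} as a fixed-point equation in $\atrl$ and to invoke the Knaster-Tarski theorem. Define $\fuf{\atrl} = \atrl \avsupe{\atrl}$ and $\fus{\atrl} = \pw \atrl / (\ct - \atrl)$, so that multiplying Equation~\ref{eq:little-res} through by $\atrl$ yields $\fuf{\atrl} = \fus{\atrl}$. Since $\ffus$ is strictly increasing and bijective from $[0, \ct[$ onto $[0, \pinf[$, it admits an inverse, and solving $x = \fus{\fusi{x}}$ gives the closed form $\fusi{x} = x \ct / (\pw + x)$. The equation therefore becomes $\atrl = \fusi{\fuf{\atrl}}$, and a direct substitution confirms that the iterate $u_{n+1}$ in the statement is precisely $(\ffusi \circ \ffuf)(u_n)$.

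Next I would verify the monotonicity prerequisites of Knaster-Tarski. The composition $\ffusi \circ \ffuf$ is non-decreasing: $\ffusi$ is non-decreasing as the inverse of an increasing bijection, while $\ffuf$ is non-decreasing because on $[0, \atrlf]$ it equals the constant $\mem + \cw + \scas$ (the non-contended \supw multiplied by $\atrl$), and on $[\atrlf, \pinf[$ it equals $(\atrl(\atrl+2)/(\atrl+1))(\cw + \avexp{\atrl}) + 2 \scas \atrl$, a sum and product of non-negative, non-decreasing functions (the expansion $\avexp{\cdot}$ itself being non-decreasing). The two branches match at $\atrl = \atrlf$ thanks to $\avexp{\atrlf} = 0$, as already argued inside the proof of Lemma~\ref{lem:lit-swi}.

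The third step is to anchor the iteration on a post-fixed point. Set $u_0 = \atrli = (\mem + \cw + \scas) \ct / (\pw + \mem + \cw + \scas)$. Since $\avsupe{\atrl} \geq (\mem + \cw + \scas)/\atrl$ holds for every positive $\atrl$ — the non-contended branch dominates at the switch and $\ffuf$ is monotone — one obtains $\fuf{\atrli} \geq \mem + \cw + \scas$, so $\fusi{\fuf{\atrli}} \geq \fusi{\mem + \cw + \scas} = \atrli$. Monotonicity of $\ffusi \circ \ffuf$ then yields $u_{n+1} \geq u_n$ for all $n$, and because $\ffusi$ takes values in $[0, \ct[$ the sequence is bounded above by $\ct$. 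It therefore converges to some $\atrl^\star$, and continuity of $\ffusi \circ \ffuf$ transfers the fixed-point relation to the limit, so $\atrl^\star = \fusi{\fuf{\atrl^\star}}$; the throughput is then $\thru = \avsupe{\atrl^\star}^{-1}$ by Equation~\ref{eq:little-res}.

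The main obstacle will be justifying that the iteration selects the physically meaningful solution. The equation $\atrl = \fusi{\fuf{\atrl}}$ may admit several fixed points, but the relevant one is the smallest, corresponding to the first solution encountered as the expansion is grown from $0$. Knaster-Tarski handles exactly this: iterating a monotone map from a post-fixed point $u_0 \leq (\ffusi \circ \ffuf)(u_0)$ converges to the least fixed point that dominates $u_0$, and the deliberate choice $u_0 = \atrli$ — the zero-expansion non-contended steady state — is precisely what ensures the limit is the minimal, physically realized fixed point.
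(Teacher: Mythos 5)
Your proposal is correct and follows essentially the same route as the paper's own proof: the same auxiliary functions \ffuf and \ffus, the same reformulation $\atrl = \fusi{\fuf{\atrl}}$, the same anchor point $\atrli$, and the same appeal to monotonicity of $\ffusi \circ \ffuf$ and the Knaster--Tarski theorem. Your derivation of $\fuf{\atrl} \geq \mem+\cw+\scas$ from monotonicity of \ffuf at the switch point is a mildly cleaner variant of the paper's crossing argument, and you spell out the bounded monotone iteration more explicitly, but these are refinements rather than a different approach.
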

\rr{ 
\begin{proof}
\prooffp
\end{proof}
} 

\section{Constructive Approach}
\label{sec:cons}


In this section, we instantiate the probability distribution of the
\pww with an exponential distribution. We have therefore a
better knowledge of the behavior of the algorithm, particularly in
medium contention cases, which allows us to follow a fine-grained
approach that studies individually each successful operation together
with every \cas occurrence. We provide an elegant and efficient
solution that relies on a Markov chain analysis.

\vspace{-.3cm}
\subsection{Process}
\label{sec:mark-proc}

We have seen in Section~\ref{sec:fra-app-con} that
we split the contention domain into three modes:
no contention, medium contention or high
contention.
\pr{We }{The main idea is to }start from a configuration with a given
number of threads \trl \rr{just }after a successful \cas, and describe
what will happen until the next successful \cas: what will be the mode
of the next \supw, and\rr{ even} more precisely, which will be the
number of threads at the beginning of the next \supw.

As a basis, we consider the execution that would occur without any
other thread exiting the \ps (then entering the \rl); we call this
execution the {\it internal execution}. This execution follows the
\supw pattern described in Figure~\ref{fig:seq} (with an infinite
\watiw if the system is not contended).
On top of this basic \supw, we inject the threads that can exit the
\ps, which has a double impact. On the one hand, they increase the
number of threads inside the \rl for the next \supw. On the other
hand, if the first thread that exits the \ps starts its \re during the
\watiw of the \supw of the internal execution, then this thread will
succeed its \acc, which is a \rf, and will shrink the actual \watiw
of the current \supw.

According to the distribution probability of the arrival of the new
threads, we can compute the probability for the next \supw to
start with any number of threads. The expression of this stochastic
sequence of \supws in terms of Markov chains results in the throughput
{estimate}.

\vspp{-.4}
\subsection{Expansion}
\label{sec:mark-expa}

The expansion, as before, represents the additional time in the
execution time of a \re, due to the serialization of atomic
primitives. However, in contrary to Section~\ref{sec:little-expa}, we
compute here this additional time in the current \supw, according to
the number of threads \trl inside the \rl at the beginning of the
\supw.
The expansion only appears when the \supw is highly contended, \ie
when we can find a continuous sequence of \cass all through the
\supw.

The expansion is highly correlated with the way the cache coherence
protocol handles the exchange of cache lines between threads. We rely
on the experiments of the research report associated
with~\cite{ali-same}, which show that if several threads request for
the same cache line in order to operate a \cas, while another thread
is currently executing a \cas, they all have an equal probability to
obtain the cache line when the current \cas is over.

\def\herec{.5}
\def\wirec{1}
\def\wicw{2.5}
\newcommand{\dcasf}[2]{
\draw[pattern=north west lines, draw=none] (0,-#2*\herec) rectangle (#1*\wirec,-#2*\herec+\herec);
\draw[fill=red] (#1*\wirec,-#2*\herec) rectangle ++(\wirec,\herec) node[midway, align=center] {\cas};
}
\newcommand{\dcass}[2]{
\draw[fill=\green] (#1*\wirec,-#2*\herec) rectangle ++(\wirec,\herec) node[midway, align=center] {\cas};
}
\newcommand{\dcw}[2]{
\draw[fill=\maroon] (#1*\wirec,-#2*\herec) rectangle ++(\wicw,\herec) node[midway, align=center] {\cw};
}
\newcommand{\dpw}[3]{
\draw[fill=\grey] (#1*\wirec,-#2*\herec) rectangle ++(#3*\wirec,\herec) node[midway, align=center] {\pw};
}
\newcommand{\dwt}[3]{
\draw[densely dashed] (#1*\wirec,-#2*\herec+.5*\herec) -- ++(#3*\wirec,0);
}

\def\marup{1}
\def\decx{.2}
\def\decy{.35}

\newcommand\arcod[2]{
\draw[draw=blue,very thick,fill=blue] #1 -- ++ (0,-#2) -- ++(\decx,\decy) -- ++(-\decx,0);
}
\newcommand\arcou[2]{
\draw[draw=blue, very thick,fill=blue] #1 --  ++ (0,#2) -- ++(\decx,-\decy) -- ++(-\decx,0);}

\pp{
\begin{figure}
\begin{center}
\hspace*{-0cm}\begin{minipage}{.4\textwidth}
\begin{center}
\begin{tikzpicture}[scale=.73]
\clip (-1*\wirec,1*\herec) rectangle (8*\wirec,-9.5*\herec);

\dcass{0}{0}\dpw{1}{0}{8}
\dcasf{1}{3}\dcw{2}{3}
\dcasf{2}{2}\dcw{3}{2}
\dcasf{3}{5}\dcw{4}{5}
\dcasf{4}{4}\dcw{5}{4}
\dcasf{5}{1}\dcw{6}{1}
\dcasf{6}{6}\dcw{7}{6}

\dwt{2+\wicw}{3}{3.5}
\dwt{3+\wicw}{2}{1.5}\dcass{3+1.5+\wicw}{2}
\dwt{4+\wicw}{5}{1.5}
\dwt{5+\wicw}{4}{.5}

\draw[blue,densely dotted, thick] (5*\wirec,0) -- ++(0,-6.4*\herec) -- ++(-.5*\wirec,0) -- ++(0,-.5*\herec) ;
\draw[blue,densely dotted, thick] (6*\wirec,0) -- ++(0,-6.9*\herec);
\draw[blue,densely dotted, thick] (7*\wirec,0) -- ++(0,-6.4*\herec) -- ++(.5*\wirec,0) -- ++(0,-.5*\herec) ;
\draw[blue,densely dotted, thick] (1*\wirec,0) -- ++(0,-6.9*\herec);

\node[text width=1cm, align=center] at (4.5*\wirec,-8*\herec) { {\color{red} $\trl-4$}\\ {\color{black} vs}\\ {\color{green}1}};
\node[text width=1cm, align=center] at (6*\wirec,-8*\herec) { {\color{red} $\trl-5$}\\ {\color{black} vs}\\ {\color{green}2}};
\node[text width=1cm, align=center] at (7.5*\wirec,-8*\herec) { {\color{red} $\trl-6$}\\ {\color{black} vs}\\ {\color{green}3}};

\node[text width = 3cm, align=center] at (1.2*\wirec,-7.8*\herec) {\trl threads inside\\ the \rl};

\node[draw=black,rounded corners=4,scale=.8] at (-.5*\wirec,0.5*\herec ) {\tiny Thread 1};
\node[draw=black,rounded corners=4,scale=.8] at (-.5*\wirec,-0.5*\herec) {\tiny Thread 2};
\node[draw=black,rounded corners=4,scale=.8] at (-.5*\wirec,-1.5*\herec) {\tiny Thread 3};
\node[draw=black,rounded corners=4,scale=.8] at (-.5*\wirec,-2.5*\herec) {\tiny Thread 4};
\node[draw=black,rounded corners=4,scale=.8] at (-.5*\wirec,-3.5*\herec) {\tiny Thread 5};
\node[draw=black,rounded corners=4,scale=.8] at (-.5*\wirec,-4.5*\herec) {\tiny Thread 6};
\node[draw=black,rounded corners=4,scale=.8] at (-.5*\wirec,-5.5*\herec) {\tiny Thread 7};

\end{tikzpicture}
\end{center}
\captionof{figure}{Highly-contended execution\label{fig:mark-expa}}
\end{minipage}%
\hspace*{.35cm}\begin{minipage}{.65\textwidth}
\begin{center}
\begin{tikzpicture}[%
it/.style={%
    rectangle,
    text width=11em,
    text centered,
    minimum height=\pr{2}{3}em,
    draw=black!50,
    scale=.7,
  }
]
\coordinate (O) at (0,0);
\pha{pcas}{O}{\cas}{\pr{2}{5}}{\green}
\pha{sla}{pcas}{\wati{i}}{\pr{5}{8}}{\grey}
\pha{acc}{sla}{\cas}{\pr{2}{4}}{red}
\pha{cri}{acc}{\cw}{\pr{3}{4}}{\maroon}
\pha{exp}{cri}{\reexp{i}}{\pr{4}{6}}{\grey}
\pha{fcas}{exp}{\cas}{\pr{2}{5}}{\green}
\coordinate (intsl) at ($(sla.north west)+(0,.4*\marup)$); 
\coordinate (intsr) at ($(sla.north east)+(0,.4*\marup)$);
\coordinate (inte) at ($(fcas.north east)+(0,.4*\marup)$); 
\draw [decorate,decoration={brace,amplitude=10pt}]
(intsl)  --  (intsr)
node [black,midway,yshift=15pt] {\scriptsize 0 new thread};

\draw [decorate,decoration={brace,amplitude=10pt}]
(intsr) -- (inte)
node [black,midway,yshift=15pt] {\scriptsize $k+1$ new threads};
\arcod{($(acc.north west)!.2! (acc.north east) + (0,.5*\marup)$)}{.4*\marup}
\arcod{($(exp.north west)!.1! (exp.north east) + (0,.5*\marup)$)}{.4*\marup}
\arcod{($(exp.north west)!.9! (exp.north east) + (0,.5*\marup)$)}{.4*\marup}

\coordinate (extsl) at ($(sla.south west)+(0,-.4*\marup)$);
\coordinate (extsr) at ($(sla.south east)+(0,-.4*\marup)$);
\draw [decorate,decoration={brace,mirror,amplitude=10pt},text width=11em, align=center] (extsl) -- (extsr)
node (caca) [black,midway,yshift=-13pt,xshift=26pt] {\scriptsize at least 1 new thread};

\coordinate (Ob) at ($(sla.south west)!.2!(sla.south) + (0,-1.5*\marup)$);
\pha{accb}{Ob}{\rf}{\pr{2}{4}}{yellow}
\pha{crib}{accb}{\cw}{\pr{3}{4}}{\maroon}
\pha{expb}{crib}{\reexp{i}}{\pr{4}{6}}{\grey}
\pha{fcasb}{expb}{\cas}{\pr{2}{5}}{\green}
\arcou{(accb.south west)}{1.65*\marup}
\arcou{($(crib.south west)!.2! (crib.south east) + (0,-.6*\marup)$)}{.4*\marup}
\arcou{($(expb.south west)!.9! (expb.south east) + (0,-.6*\marup)$)}{.4*\marup}
\draw[very thick, draw=blue] (accb.south west) -- (fcasb.south east) -- (fcasb.north east) -- (accb.north west);
\coordinate (extnl) at ($(accb.south west)+(0,-.5*\marup)$); 
\coordinate (extnr) at ($(fcasb.south east)+(0,-.5*\marup)$);
\draw [decorate,decoration={brace,mirror,amplitude=10pt},text width=11em, align=center]
(extnl) -- (extnr)
node [black,midway,yshift=-15pt] {\scriptsize $k$ new threads};

\draw[dotted, draw=black] (intsl) -- (extsl);
\draw[dotted, draw=black] (intsr) -- (extsr);
\draw[dotted, draw=black] (inte) -- (fcas.south east);
\draw[dotted, draw=black] (accb.north west) -- (extnl);
\draw[dotted, draw=black] (fcasb.north east) -- (extnr);
\node[text width=1.2cm, align=center] (inttext) at ($(pcas.west) + (-1*\marup,0)$) {Internal\\ execution};
\node[anchor=east] (eint) at ($(inttext.east) + (0.2,1.2*\marup)$) {\eve{int}};
\node[anchor=east] (eext) at ($(inttext.east) + (0.2,-2*\marup)$){\eve{ext}};

\path[->,out=90,in=-180] ($(inttext.north west)!.3!(inttext.north)$) edge (eint);
\path[->,out=-90,in=-180] ($(inttext.south west)!.3!(inttext.south)$) edge (eext);
\end{tikzpicture}
\end{center}
\captionof{figure}{Possible executions\label{fig:ex-eint-eext}}
\end{minipage}
\end{center}
\end{figure}
}

\rr{
\begin{figure}
\begin{center}
\begin{tikzpicture}[scale=1.6]
\clip (-1.2*\wirec,1*\herec) rectangle (8*\wirec,-9*\herec);

\dcass{0}{0}\dpw{1}{0}{8}
\dcasf{1}{3}\dcw{2}{3}
\dcasf{2}{2}\dcw{3}{2}
\dcasf{3}{5}\dcw{4}{5}
\dcasf{4}{4}\dcw{5}{4}
\dcasf{5}{1}\dcw{6}{1}
\dcasf{6}{6}\dcw{7}{6}

\dwt{2+\wicw}{3}{3.5}
\dwt{3+\wicw}{2}{1.5}\dcass{3+1.5+\wicw}{2}
\dwt{4+\wicw}{5}{1.5}
\dwt{5+\wicw}{4}{.5}

\draw[blue,densely dotted, thick] (5*\wirec,0) -- ++(0,-6.4*\herec) -- ++(-.5*\wirec,0) -- ++(0,-.5*\herec) ;
\draw[blue,densely dotted, thick] (6*\wirec,0) -- ++(0,-6.9*\herec);
\draw[blue,densely dotted, thick] (7*\wirec,0) -- ++(0,-6.4*\herec) -- ++(.5*\wirec,0) -- ++(0,-.5*\herec) ;
\draw[blue,densely dotted, thick] (1*\wirec,0) -- ++(0,-6.9*\herec);

\node[text width=2cm, align=center] at (4.5*\wirec,-8*\herec) { {\color{red} $\trl-4$}\\ {\color{black} vs}\\ {\color{green}1}};
\node[text width=2cm, align=center] at (6*\wirec,-8*\herec) { {\color{red} $\trl-5$}\\ {\color{black} vs}\\ {\color{green}2}};
\node[text width=2cm, align=center] at (7.5*\wirec,-8*\herec) { {\color{red} $\trl-6$}\\ {\color{black} vs}\\ {\color{green}3}};

\node[text width = 3cm, align=center] at (1.2*\wirec,-7.8*\herec) {\trl threads inside\\ the \rl};

\node[draw=black,rounded corners=4,scale=.8] at (-.6*\wirec,0.5*\herec ) {\small Thread 1};
\node[draw=black,rounded corners=4,scale=.8] at (-.6*\wirec,-0.5*\herec) {\small Thread 2};
\node[draw=black,rounded corners=4,scale=.8] at (-.6*\wirec,-1.5*\herec) {\small Thread 3};
\node[draw=black,rounded corners=4,scale=.8] at (-.6*\wirec,-2.5*\herec) {\small Thread 4};
\node[draw=black,rounded corners=4,scale=.8] at (-.6*\wirec,-3.5*\herec) {\small Thread 5};
\node[draw=black,rounded corners=4,scale=.8] at (-.6*\wirec,-4.5*\herec) {\small Thread 6};
\node[draw=black,rounded corners=4,scale=.8] at (-.6*\wirec,-5.5*\herec) {\small Thread 7};

\end{tikzpicture}
\end{center}
\caption{Highly-contended execution\label{fig:mark-expa}}
\end{figure}
}

We draw an illustrative example in Figure~\ref{fig:mark-expa}. The
green \cass are successful while the red \cass fail. To lighten
the picture, we hide what happened for the threads before they
experience a failed \cas. The horizontal dash lines represent the time
where a thread wants to access the data in order to operate a \cas
but has to wait because another thread owns the data in exclusive
mode.
We can observe in this example that the first thread that accesses
the \ds is not the thread whose operation returns.

We are given that \trl threads are inside the \rl at the end of the
previous successful \cas, and we only consider those threads. When
such a thread executes a \cas for the first time, this \cas is
unsuccessful. The thread was in the \rl when the successful \cas has
been executed, so it has read a value that is not up-to-date
anymore. However, this failed \cas will bring the current version of
the value (to compare-and-swap) to the thread, a value that will be
up-to-date until a successful \cas occurs.

So we have firstly a sequence of failed \cass until the first thread
that operated its \cas within the current \supw finishes its \cww. At
this point, there exists a thread that is executing a \cas. When this
\cas is finished, some threads compete to obtain the cache line. We
have two bags of competing threads: in the first bag, the thread that
just ended its \cww is alone, while in the second bag, there are all
the threads that were in the \rl at the beginning of the \supw, and
did not operate a \cas yet. The other, non-competing, threads are
running their \cww and do not yet want to access the data.

As described before, every thread has the same probability to become the next
owner of the cache line. If a thread from the first bag is drawn, then
the \cas will be successful and the \supw ends. Otherwise, the \cas is
a failure, and we iterate at the end of this failed \cas. However, the
thread that just failed its \cas is now executing its \cww,
and does not request for a new \cas until this work has been done,
thus it is not anymore in the second bag. In addition, the thread that
had executed its \cas after the thread of the first bag is now back
from its \cww and falls into the first bag. The process iterates until
a thread is drawn from the first bag.

As a remark, note that we do not consider threads that are not in the
\rl at the beginning of the \supw since even if they come back from
the \ps during the \supw, their \rf will be delayed and their \cas is
likely to occur after the end of the \supw.

Theorem~\ref{th:mark-expa} \pp{, proved in Appendix~\ref{app:aliexp}, }
 gives the explicit formula for the expansion\pr{.}{, based on the previous
explanations.}

\newcommand{\prsu}[1]{\ema{p_{#1}}}

\begin{theorem}
\label{th:mark-expa}
The expected time between the end of the \cww of the first
thread that operates a \cas in the \supw and the beginning of a
successful \cas is given by:\vspp{0}
\[ \reexp{\trl} = \lceil\cw/\scas\rceil\scas - \cw +
\sum_{i=1}^{\tcom} \frac{i(i-1)}{\left(\tcom\right)^i} \frac{(\tcom-1)!}{(\tcom-i)!} \times \scas, \pp{\quad\text{where }\tcom = \trl - \lceil\cw/\scas\rceil +1.}\]
\rr{where $\tcom = \trl - \lceil\cw/\scas\rceil +1$.}
\end{theorem}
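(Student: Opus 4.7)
My plan is to reconstruct the internal execution of the current super-window step by step, identify which threads can compete for the cache line at each \cas slot, and turn the expected number of failed accesses into the closed-form sum announced by the statement. I would place the origin of time at the end of the previous successful \cas and restrict attention to the $\trl$ threads that are in the retry loop at that instant. Each of them holds stale data, so the next \cas it issues is bound to fail, and these failed \cass fill the consecutive slots of length $\scas$ starting at $t=\scas$. The earliest moment at which a freshly read value can be CAS-ed is when the first such thread finishes its \cww \emph{and} the cache line is free; the smallest time compatible with both constraints is $t_0=\scas+\lceil\cw/\scas\rceil\scas$. Since $\reexp{\trl}$ is measured from $t=\scas+\cw$, this already accounts for the constant term $\lceil\cw/\scas\rceil\scas-\cw$ in the formula.

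Next, I would count the threads that actually race for the cache line from $t_0$ on, distinguishing \emph{stale} threads (those that have not yet issued any \cas in the current super-window) from \emph{fresh} threads (those that have just returned from \cww and carry an up-to-date value). Walking the execution one \cas slot at a time gives the key invariant: the total competing population stays of size $\tcom$, while one additional thread becomes fresh at each successive round. Combined with the equal-probability cache-line serving assumption recalled just above, this yields a success probability $\prsu{i}=i/\tcom$ at round $i$, hence a probability $\prsu{i}\prod_{j=1}^{i-1}(1-\prsu{j})$ that the first success occurs at round $i$ after $i-1$ failed \cass.

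Then I would compute the expected number of post-$t_0$ failures,
\[
\expe{F}=\sum_{i=1}^{\tcom}(i-1)\,\prsu{i}\prod_{j=1}^{i-1}\bigl(1-\prsu{j}\bigr),
\]
rewrite each $1-j/\tcom$ as $(\tcom-j)/\tcom$, collapse the resulting product into $(\tcom-1)!/(\tcom-i)!$, and collect the $1/\tcom$ factors to obtain the summand displayed in the statement. Multiplying by $\scas$ and adding $\lceil\cw/\scas\rceil\scas-\cw$ concludes.

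The main obstacle is the bookkeeping of the second step: justifying that the competing population remains of constant size $\tcom$ while the fresh fraction grows linearly with the round index requires a careful round-by-round induction, chasing when each thread enters \cww and when it returns. Once this invariant is in hand, both the probabilistic and the algebraic parts are mechanical.
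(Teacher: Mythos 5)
Your proposal matches the paper's proof essentially step for step: the same threshold $t_0=\scas+\lceil\cw/\scas\rceil\scas$ yielding the deterministic term $\lceil\cw/\scas\rceil\scas-\cw$, the same constant-size competing population with success probability $i/\tcom$ at round $i$ (the paper phrases your stale/fresh distinction as two ``bags'' of threads), and the same algebraic collapse of $\expe{F}$ into the stated sum times $\scas$. The round-by-round invariant you flag as the main obstacle is precisely the point the paper also asserts without a detailed induction, so no divergence there either.
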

\rr{ 
\begin{proof}
\proofaliexp
\end{proof}
}  

\subsection{Formalization}

The \pww follows an exponential distribution, whose mean is
\pw. More precisely, if a thread starts a \ps at the instant $t_1$,
the probability distribution of the execution time of the \ps is
\pr{$ t \mapsto \lambda \expu{-\lambda (t-t_1)} \indi{[t_1,\pinf[}{t}$, where $\lambda = 1/\pw. $}
{\[ t \mapsto \lambda \expu{-\lambda (t-t_1)} \indi{[t_1,\pinf[}{t}, \text{ where } \lambda = \frac{1}{\pw}. \]}
This probability distribution is memoryless, which implies that the
threads that are executing their \ps cannot be differentiated: at a
given instant, the probability distribution of the remaining execution
time is the same for all threads in the \ps, regardless of when the \ps began. For all
threads, it is defined by:
\pr{ $ t \mapsto \lambda \expi{-\lambda t}$, where $\lambda = 1/\pw. $}
{\[ t \mapsto \lambda \expu{-\lambda t}, \text{ where } \lambda = \frac{1}{\pw}. \]}

For the behavior in the \rl, we rely on the same approximation as in
the previous section, \ie when a successful thread exits its
\rl, the remaining execution time of the \re of every other thread
that is still in the \rl is uniformly distributed between $0$ and the
execution time of a whole \re. We have seen that the expectation of
this remaining time is the size of the execution time of a \re divided
by the number of threads inside the \rl plus one. Here, we assume that
a thread will start a \re at this time.
This implies another kind of memoryless property: the behavior of a
thread that is in the \rl does not depend on the moment that it
entered its \rl.

To tackle the problem of estimating the throughput of such a system,
we use an approach based on Markov chains. We study the behavior of
the system over time, step by step: a state of the Markov chain
represents the state of the system when the current \supw began (\ie
just after a successful \cas) and (thus) the system changes state at
the end of every successful \cas.
According to the current state, we are able to compute the probability
to reach any other state at the beginning of the next \supw.
In addition, the two memoryless properties render the description of a
state easy to achieve: the number of threads inside the \rl when the
current success begins, indeed fully characterizes the system.

We recall that \trl is the number of threads inside the \rl when the
\supw begins.
The Markov chain is strongly connected with \trl, since it is composed of
\ct states $\sta{0}, \sta{1}, \dots, \sta{\ct-1}$, where, for all $i
\in \inte{\ct-1}$, the \supw is in state \sta{i} iff $\trl=i$. For all
$(i,j) \in \inte{\ct-1}^2$, $\pro{\sta{i} \rightarrow \sta{j}}$
denotes the probability that a success characterized by \sta{j}
follows a success in state \sta{i}.
$\wati{\sta{i} \rightarrow \sta{j}}$ denotes the
\watiw that passed while the system has gone from state \sta{i} to
state \sta{j}.
This \watiw can be expressed based on the \watiw \wati{i} of the
internal execution, \ie the execution that involves only the $i$
threads of the \rl and ignores the other threads (see
Section~\ref{sec:mark-proc}). \rr{Recall that we consider that the \watiw of
the internal execution with $0$ thread is infinite, since no thread will
access the \ds.}
In the same way, we denote by \rw{i} the \compw of the internal
execution, hence $\rw{i} = \cc + \cw + \reexp{i} + \cc$.

\newcommand{\intgen}[1]{\ema{\mathcal{I}_{\mathrm{#1}}}}
\newcommand{\intnoc}{\intgen{noc}}
\newcommand{\intmid}{\intgen{mid}}
\newcommand{\inthig}{\intgen{hi}}

\newcommand{\ihig}{\ema{i_{\mathrm{hi}}}}

We have seen that the level of contention (mode) is determined by
\trl, hence the interval \inte{\ct-1} can be partitioned into
\pr{$\inte{\ct-1} = \intnoc \cup \intmid \cup \inthig$,}
{\[ \inte{\ct-1} = \intnoc \cup \intmid \cup \inthig, \]}
where the partitions correspond to the different contention levels.
So, by definition, $\intnoc = \{ 0 \}$, and for all $i \in \intnoc
\cup \intmid$, $\reexp{i} = 0$ (see Section~\ref{sec:fra-app-con}).

The \supw is highly-contended, \ie we have a continuous sequence of
\cass in the \supw, if the sum of the execution time of all the
\cass that need to be operated exceeds the \cww. Hence
$\inthig = \inte[\ihig]{\ct-1}$, where
\pr{$\ihig = \min \{ i \in \inte[1]{\ct-1} \; | \; i \times \scas > \cw  \}$.}
{\[ \ihig = \min \{ i \in \inte[1]{\ct-1} \; | \; i \times \scas > \cw  \}. \]}
In addition, as the sequence of \cass is continuous when the
contention is high, the \watiw is null when the \supw is highly
contended, \ie, for all $i \in \inthig$, $\wati{i} = 0$, and \afort,
$\wati{\sta{i} \rightarrow \sta{\star}} = 0$.

Otherwise, the \supw is in medium contention, hence $\intmid =
\inte[1]{\ihig-1}$. Moreover, if $i \in \intmid$, $\wati{i} > 0$, and
$\reexp{i} = 0$, because the \cass synchronized during the previous
\supw and will not collide any more in the current \supw.

\pp{Everything is now in place to be able to obtain the stationary
  distribution of the Markov chain, and in turn to compute the
  throughput and the failure rate estimates. The reasoning that leads
  to the computation of the probability of going from state \sta{i} to
  state \sta{i+k} can be roughly summarized by
  Figure~\ref{fig:ex-eint-eext}, where we start from an internal
  execution with $i$ threads inside the \rl and the blue arrows
  represent the threads that exit the \ps. Two non-overlapping events
  can then potentially occur: either (event \eve{ext}) the first thread exiting the \ps
  starts within $[0,\wati{i}[$, \ie in the \watiw of
      the internal execution, or (event \eve{int}) the first thread entering the \rl
      starts after $t=\wati{i}$. The details can be
      found in Appendix~\ref{app:markov}.
}

\rr{\wholemarkov}

\section{Experiments}
\label{sec:xp}

To validate our analysis results, we use two main types of lock-free
algorithms.  
In the first place, we consider a set of basic algorithms where operations 
can be completed with a single successful \cas.
This set of algorithms includes: (i) synthetic
designs, that cover the design space of possible lock-free data structures;
(ii) several fundamental designs of \ds operations such as lock-free
stacks~\cite{lf-stack} (\popop, \pushop),
queues~\cite{lf-queue-michael} (\deqop), counters~\cite{count-moir}
(\incop, \decop).
As a second step, we consider more advanced lock-free operations that
involve helping mechanisms, and show how to use our analysis in this
context.  Finally, in order to highlight the benefits of the analysis
framework, we show how it can be applied to i) determine a beneficial back-off
strategy and ii) optimize the memory management scheme used by a \ds,
in the context of an application.

We also give insights about the strengths of our two approaches.
\pr{The }{On the one hand, the }constructive approach exhibits better predictions
due to the tight estimation of the failing retries. On the other hand, the
\avba approach is applicable to a broader spectrum of algorithmic designs
as it leaves room to abstract complicated algorithmic designs.

\vspp{-.2}
\subsection{Setting}

We have conducted experiments on an Intel ccNUMA workstation
system. The system is composed of two sockets equipped with
Intel Xeon E5-2687W v2 CPUs\pp{.}\rr{ with frequency band
  \ghz{1.2\text{-}3.4.} The physical cores have private L1, L2 caches
  and they share an L3 cache, which is \megb{25}.}  In a socket, the
ring interconnect provides L3 cache accesses and core-to-core
communication. \rr{Due to the bi-directionality of the ring
  interconnect, uncontended latencies for intra-socket communication
  between cores do not show significant variability.}\pp{Threads are
  pinned to a single socket to minimize non-uniformity in \rf and \cas
  latencies. }%
\pp{The methodology in~\cite{david-emp-atom} is used to measure the
  \cas and \rf latencies, while the \pww is implemented
  by a for-loop of {\it Pause} instructions. }%
\rr{Our model assumes uniformity in the \cas and \rf latencies on the
  shared cache line. Thus, threads are pinned to a single socket to
  minimize non-uniformity in \rf and \cas latencies.  In the
  experiments, we vary the number of threads between 4 and 8 since the
  maximum number of threads that can be used in the experiments are
  bounded by the number of physical cores that reside in one socket. }%
We show the experimental results with 8 threads.

In all figures, the y-axis shows both the throughput\rr{ values}, \ie
number of operations completed per second, and the ratio of failing to
successful retries (multiplied by $10^6$, for readability),
while the mean of the exponentially distributed \pww \pw is
represented on the x-axis. 
The number of failures per success in the \avba
approach is computed as $\atrl-1$ and \pr{in the constructive approach
by stochastically counting the failing \cass inside a \supw
(see Appendix~\ref{sec:nbf}).}{ is described in Section~\ref{sec:nbf}
for the constructive approach.}

\rr{\figsynthcst}

We have also added a straightforward upper bound as a baseline
approach,\rr{ which is} defined as the minimum of $1/(\rc+\cw+\cc)$ (two
successful \res cannot overlap) and $\ctot/(\pw+\rc+\cw+\cc)$ (a
thread can succeed only once in each \wl).

\vspp{-.2}
\subsection{Basic Data Structures}

\pr{Firstly, we consider lock-free operations that can be completed
with a single successful \cas.
  We provide predictions, on the one
  hand, on a set of synthetic tests that have been constructed to
  abstract different possible design patterns of lock-free data
  structures (value of \cw) and different application contexts (value
  of \pw), and, on the other hand, on the well-known Treiber's
  stack. The results are depicted in Appendix~\ref{app:xp-basic}.

}
{
\synthtreib

\figsynthpoi
\figsynthconst

\subsubsection{Synthetic Tests}
\label{sec:synt_tests}
\synth

\subsubsection{Treiber's Stack}

\figtreib

\treib

}

\subsection{Towards Advanced Data Structure Designs}
\label{sec:advanced-ds}

Advanced lock-free operations generally require multiple pointer
updates that cannot be done with a single \cas. 
One way to design such operations, in a lock-free manner, is to
use helping mechanisms: an inconsistency will be fixed eventually by
some thread.
Here we consider two \dss that apply immediate helping, the queue
from~\cite{lf-queue-michael} and the deque designed in~\cite{deq}. In the queue
experiment (Figure~\ref{fig:enqueue}), we run the \enqop operation on the
queue with and without memory management; in the \deq experiment, each
thread is dedicated to an end of the \deq (equally distributed), while
we vary the proportion of push operations (colors in Figure~\ref{fig:deq}).
\pp{More details about the implementations and the throughput estimate
obtained through a slight modification of the \avba approach can be
found in Appendix~\ref{app:ads}.}

\pp{\figsidebyside{h!}{.5}{enqueue_pp_disc}{Enqueue on MS queue}{enqueue}{.5}{deq_pp}{Operations on \deq}{deq}}

\rr{
\sumads

\subsubsection{Expected Expansion for the Advanced Data Structures}
\fulladsexp

\subsubsection{Expected Slack Time for the Advanced Data Structures}
\fulladswati

\subsubsection{Enqueue on Michael-Scott Queue}
\fullenq

\subsubsection{\Deq}
\label{sec:xp-deq}
\fulldeq}

\subsection{Applications}

\pr{\figsidebyside{b!}{.5}{back-offs-new}{Performance impact of our back-off tunings}{fig:bos}{.5}{adaptive_pp_disc}{Adaptive MM with varying mean \pw}{fig:mm_adaptive}} {

\begin{figure}[h!]
\includegraphics[width=\textwidth]{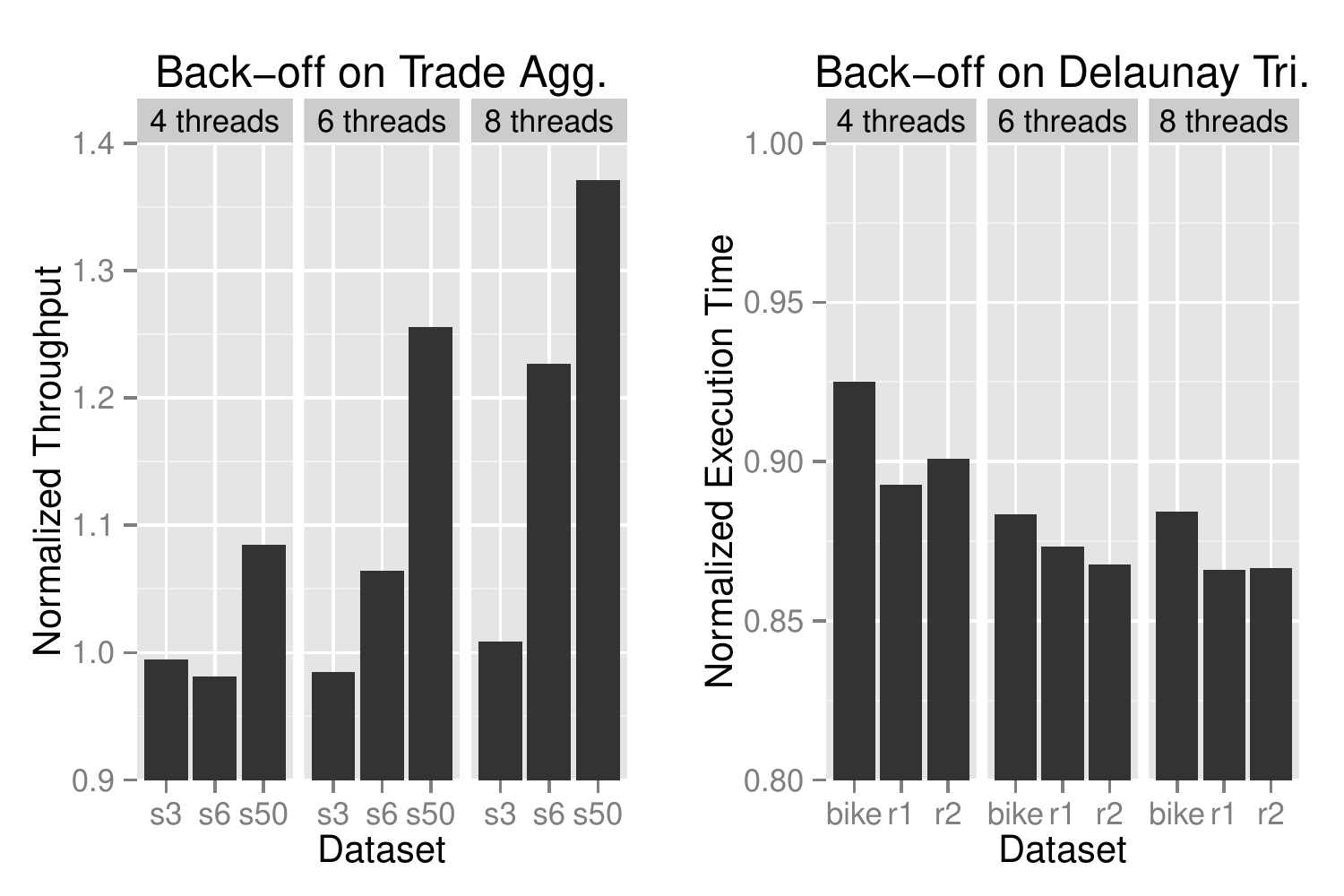}
\captionof{figure}{Performance impact of our back-off tunings \label{fig:bos}}
\end{figure}

}

\subsubsection{Back-off Optimizations}

When the \pww is known, we can deduce from our analysis a simple and
efficient back-off strategy: as we are able to estimate the value for
which the throughput is maximum, we just have to back-off for the time
difference between the peak \pw and the actual \pw.
\pr{In Appendix~\ref{app:bo}, we compare this back-off strategy
against widely known strategies, namely exponential and linear, on a
synthetic workload. }%
{\fullbosy}%
In Figure~\ref{fig:bos}, we apply our constant
back-off on a Delaunay triangulation application~\cite{caspar},
provided with several workloads. The application uses a stack in two
phases, whose first phase pushes elements on top of the stack without
delay. We are able to estimate a corresponding back-off time, and we
plot the results by normalizing the execution time of our back-offed
implementation with the execution time of the initial implementation.

A measure or an estimate of \pw is not always available (and could
change over time, see next section), therefore we propose also an
adaptive strategy: we incorporate in the \ds a monitoring routine that tracks
the number of failed \res, employing a sliding window. As our analysis
computes an estimate of the number of failed \res as a function of
\pw, we are able to estimate the current \pw, and hence the
corresponding back-off time like previously.

We test our adaptive back-off mechanism on a workload originated
from~\cite{taq-se}, where global operators of exchanges for financial
markets gather data of trades with a microsecond accuracy. We assume
that the data comes from several streams, each
of them being associated with a thread. All threads enqueue the
elements that they receive in a concurrent queue, so that they can be
later aggregated. We extract from the original data a trade stream
distribution that we use to generate similar streams that reach the
same thread; varying the number of streams to the same thread leads to
different workloads. The results, represented as the normalized
throughput (compared to the initial throughput) of trades that are enqueued when the
adaptive back-off is used, are plotted in Figure~\ref{fig:bos}.
For any number of threads, the queue is not
contended on workload s3, hence our improvement is either small or
slightly negative. On the contrary, the workload s50 contends the
queue and we achieve very significant improvement.

\rr{\begin{figure}[h!]
\begin{center}
\includegraphics[width=.7\textwidth]{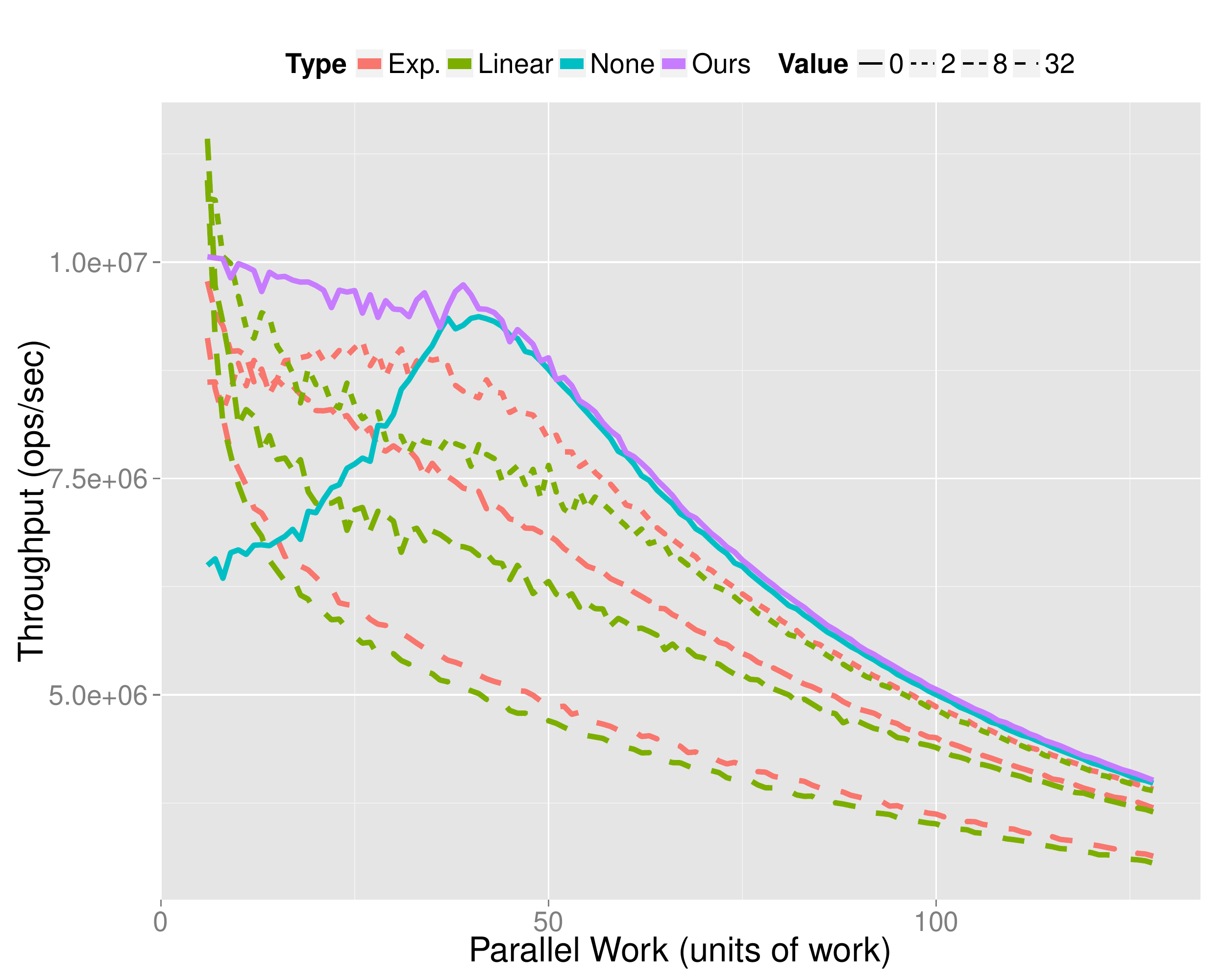}
\end{center}
\caption{Back-off Tuning on Treiber's Stack\label{fig:bo-synt}}
\end{figure}}

\subsubsection{Memory Management Optimization}

Memory Management (MM) is an inseparable part of dynamic concurrent
\dss. In contrary to lock-based implementations, a node that has been
{\it removed} from a lock-free \ds can still be accessed by other threads,
\eg if they have been delayed. Collective decisions are thus required
in order to {\it reclaim} a node in a safe manner.
A well-known solution to deal with this problem is the hazard pointers
technique~\cite{Mic04b}. \pp{In an implementation of such design each
  thread lists the nodes that it accesses and the nodes that it has
  removed. When the number of nodes it has removed reaches a threshold,
  it reclaims its listed removed nodes that are not listed as
  accessed by any thread.}

\newcommand\nodhp[1]{\ema{\mathcal{N}_{#1}}}
\newcommand\delhp[1]{\ema{\mathcal{D}_{#1}}}

\rr{A traditional design to implement this technique works as follows.
Each thread \thr{i}, maintains two lists of nodes: \nodhp{i} contains
the nodes that \thr{i} is currently accessing, and \delhp{i} stores
the nodes that have been removed from the \ds by \thr{i}. Once a
threshold on the size of \delhp{i} is reached, \thr{i} calls a routine
that: (i) collects the nodes that are accessed by any other thread, \ie
\nodhp{j} for $j \neq i$ (collection phase), and (ii) for each element
in \delhp{i}, checks whether someone is accessing the element, \ie
whether it belongs to $\cup_{j \neq i} \nodhp{j}$, and if not,
reclaims it (reclamation phase).}

The main goal of our adaptive MM scheme is to distribute this
extra-work in a way that the loss in performance is largely leveraged,
knowing that additional work can be an advantage under high-contention
(see previous section).
The optimization is based on two main
modifications.
\pr{First, we divide the reclamation phase of the traditional MM scheme
into quanta (equally-sized chunks), whose finer granularity allows for
accurate back-off times. Second, we track continuously the contention
level in the same way as our adaptive back-off. See Appendix~\ref{app:mm}.}%
{First, the granularity has to be finer, since the
additional quantum that the back-off mechanism uses, has to
be rather small (hundreds of cycles for a queue). Second, we need to
track the contention level on the \ds in order to be able to inject the work
at a proper execution point.}

\rr{\figcompmm}

\rr{\begin{figure}[t!]
\begin{center}
\includegraphics[width=.9\textwidth]{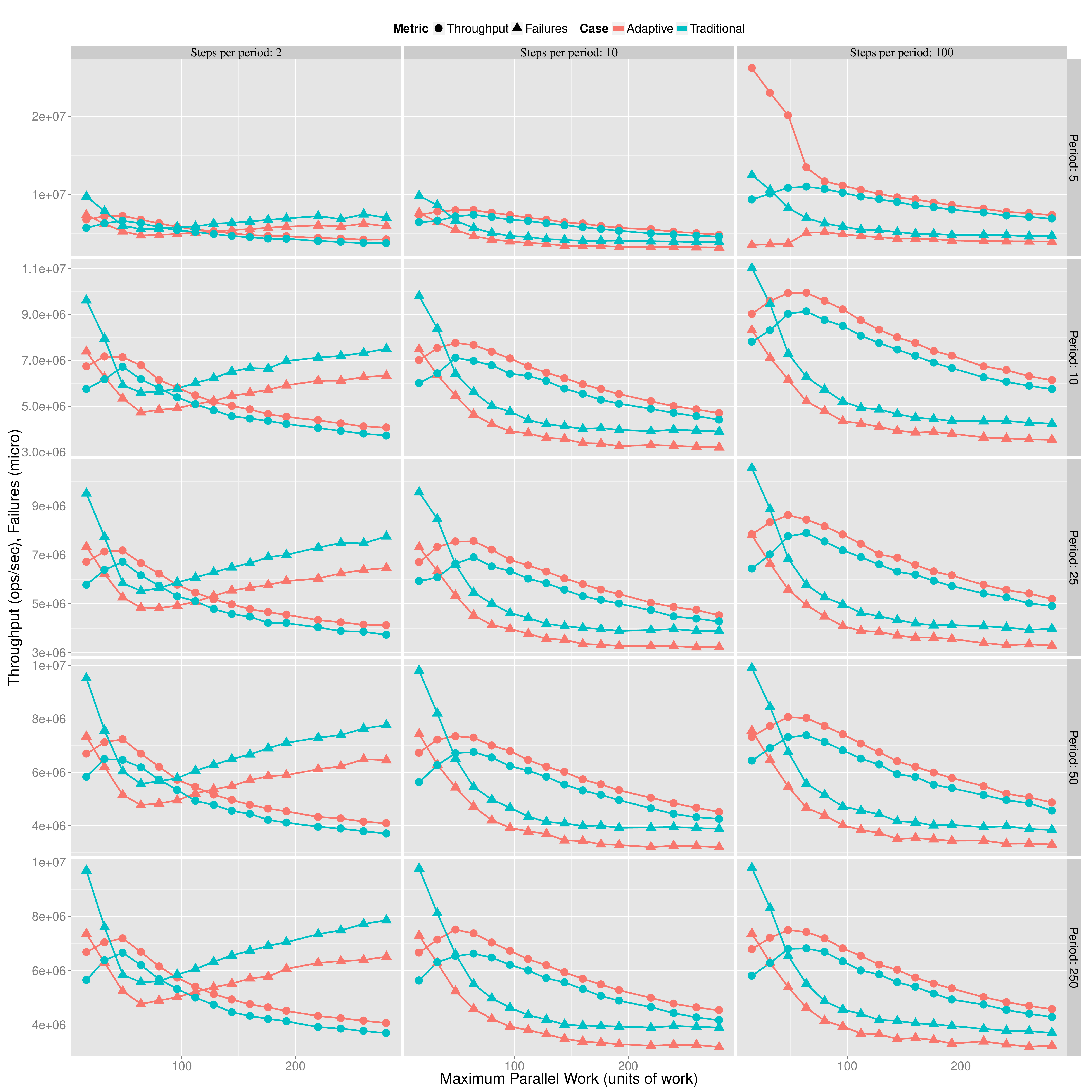}
\end{center}
\caption{Adaptive MM with varying mean \pw\label{fig:mm_adaptive}}
\end{figure}}

\rr{\bothmm}

\pr{%
We emulate the behavior of many scientific applications, that are
  built upon a pattern of alternating phases, that are
  communication-intensive (synchronization phase) or
  computation-intensive. Here we assume a synchronization ensured
  through a shared \ds, hence the communication-intensive phases
  correspond to a high access rate to the \ds, while \ds is accessed
  at a low rate during a computation-intensive phase. The \pww still
  follows an exponential distribution of mean \pw, but \pw varies in a
  sinusoidal manner with time.
To study also the impact of the continuity of the change in \pw, \pw
is set as a step approximation of a sine function. Thus, two
additional parameters rule the experiment: the period of the
oscillating function represents the length of the phases, and the
number of steps within a period depicts how continuous are the phase
changes.
}%
{\adaptsine}

In Figure~\ref{fig:mm_adaptive}, we compare our approach with the
traditional implementation for different periods of the sine function,
on the \deqop of the Michael-Scott queue~\cite{lf-queue-michael}.
The adaptive MM, that relies on the analysis presented
in this paper, outperforms the traditional MM
because it provides an advantage both under low contention due to the
costless (since delayed) invocation of the MM and under high
contention due to the back-off effect.

\pp{\vspace{-.2cm}}
\section{Conclusion}
\label{sec:conc}

In this paper we have presented two analyses for calculating the
performance of lock-free \dss in dynamic environments. The first
analysis has its roots in queuing theory, and gives the flexibility to
cover a large spectrum of configurations. The second analysis makes
use of Markov chains to exhibit a stochastic execution; it gives
better results, but it is restricted to simpler \dss and exponentially
distributed \pww.
\rr{We have evaluated the quality of the prediction on basic \dss like
stacks, as well as more advanced \dss like optimized queues and
\deqs. Our results can be directly used by algorithmicians to gain a
better understanding of the performance behavior of different designs,
and by experimentalists to rank implementations within a fair
framework. }%
We have\rr{ also} shown how to use our results to tune applications using
lock-free codes. These tuning methods include: (i) the calculation of
simple and efficient back-off strategies whose applicability is
illustrated in application contexts; (ii) a new adaptative memory
management mechanism that acclimates to a changing environment.

The main differences between the \dss of this paper and linked lists,
skip lists and trees occur when the size of the data structure
grows. With large sizes, the performance is dominated by the traversal
cost that is ruled by the cache parameters. The reduction in the size
of the data structure decreases the traversal cost which in turn
increases the probability of encountering an on-going \cas operation
that delays the threads which traverse the link.
The expansion, which can additionally be supported unfavorably by
helping mechanisms, appears then as the main performance degrading
factor.
While the analysis becomes easier for high degrees of parallelism
(large \ds size), being able to describe the behavior of lock-free
data structures as the degree of parallelism changes constitutes the
main challenge of our future work.

\pp{\small{\bibliography{./bibliography/shorthead,./bibliography/biblio}}}

\pp{\newpage\appendix

\section{Average-based Approach}
\label{app:avba}

\subsection{Helping Lemma}
\label{app:lemsl}
\lemsl

\subsection{Proof of Theorem~\ref{th:fixed-point}}
\label{app:fixed-point}

\prooffp

\section{Constructive Approach}
\label{app:markov}

\subsection{Proof of Theorem~\ref{th:mark-expa}}
\label{app:aliexp}
\proofaliexp

\wholemarkov
\figsynthcst

\section{Experiments}

\subsection{Basic Data Structures}
\label{app:xp-basic}
\synthtreib

\figsynthpoi
\figsynthconst

\paragraph{Synthetic Tests}

\synth

\paragraph{Treiber's Stack}

\figtreib

\treib

\subsection{Towards Advanced Data Structure Designs}
\label{app:ads}

\sumads

\paragraph{Expected Expansion for the Advanced Data Structures}
\fulladsexp

\paragraph{Expected Slack Time for the Advanced Data Structures}
\fulladswati

\paragraph{Enqueue on Michael-Scott Queue}
\fullenq

\paragraph{\Deq}
\label{sec:xp-deq}
\fulldeq

\subsection{Applications}

\paragraph{Back-off Optimizations}
\label{app:bo}

\begin{figure}[h!]
\begin{center}
\includegraphics[width=.6\textwidth]{stack_back_rr}
\end{center}
\caption{Back-off Tuning on Treiber's Stack\label{fig:bo-synt}}
\end{figure}

\fullbosy

\figcompmm

\paragraph{Memory Management Optimization}
\label{app:mm}

\begin{figure}[b!]
\begin{center}
\includegraphics[width=.7\textwidth]{adaptive_rr_disc}
\end{center}
\caption{Extensive experiments on adaptive MM\label{fig:ad-ext-mm}}
\end{figure}

~\\\bothmm

We show here extended experiments on Memory Management
optimization. Results are depicted in Figure~\ref{fig:ad-ext-mm}.

}

\rr{\bibliography{./bibliography/longhead,./bibliography/biblio}}

\end{document}